\documentclass[a4paper,10pt]{article}
\usepackage{graphicx}


\newcommand{\eps}{\epsilon}

\newcommand{\ignore}[1]{}

\newcommand{\qed}{\hfill\rule{.5em}{1.5ex}}
\newenvironment{proof}{\begin{trivlist}
    \item[] {\bf Proof:}}{\hspace{1.5em}\qed\end{trivlist}}

\newcommand{\eqref}[1]{{\rm(\ref{#1})}}

\newenvironment{smallenumerate}%
   {\vspace*{-4pt}
    \begin{enumerate}\itemsep=0pt}%
   {\end{enumerate}
    \vspace*{-2pt}}

   {\vspace*{-4pt}
    \begin{itemize}\itemsep=0pt}%
   {\end{itemize}
    \vspace*{-2pt}}

\newtheorem{theorem}{Theorem}

\newtheorem{lemma}{Lemma}
\newtheorem{fact}{Fact}

\newcommand{\hk}[1]{H_{#1}}     
\newcommand{\hka}[1]{H_{#1} ^a}     


\newcommand{\MTF} {{\tt MTF}}
\newcommand{\RLE} {{\tt RLE}}
\newcommand{\BWT} {{\tt BWT}}
\newcommand{\RBWT} {{\tt RBWT}}
\newcommand{\XBW} {{\tt XBW}}
\newcommand{\rem} {{\tt Remove}}
\newcommand{\app} {{\tt Append}}
\newcommand{\prev} {{\tt Prev}}
\newcommand{\next} {{\tt Next}}
\newcommand{\len} {{\tt Size}}
\newcommand{\C} {{\cal C}}
\newcommand{\G} {{\cal G}}
\newcommand{\Part} {{\cal P}}
\newcommand{\POpt} {{{\cal P}}_{\tt opt}}
\newcommand{\cost} {{\tt Cost}}
\newcommand{\PG} {{{\cal G}}_\epsilon}
\newcommand{\PPC} {{\sc PPC}}
\newcommand{\gzip} {{\tt gzip}}
\newcommand{\bzip} {{\tt bzip2}}
\newcommand{\ppm} {{\tt ppm}}

\begin{document}
\title{On optimally partitioning a text to improve its compression
\footnote{It has been partially supported by {\sf Yahoo! Research}, Italian MIUR Italy-Israel FIRB Project. The authors' address is Dipartimento di Informatica, L.go B. Pontecorvo 3, 56127 Pisa, Italy.}}

\author{Paolo Ferragina \and Igor Nitto \and Rossano Venturini \footnote{Department of Computer Science, University of Pisa.
emails: {\tt \{ferragina, nitto, rventurini\}@di.unipi.it}} }

\date{}

\maketitle

\begin{abstract}
In this paper we investigate the problem of partitioning an input string $T$ in such 
a way that compressing individually its parts via a base-compressor $\C$ gets a compressed 
output that is shorter than applying $\C$ over the entire $T$ at once. This problem was 
introduced in \cite{buchsbaum,Tables} in the context of table compression, and then further 
elaborated and extended to strings and trees by \cite{FGMS05,xbw05,MN07impl}. Unfortunately,
the literature offers poor solutions: namely, we know either a cubic-time algorithm for 
computing the optimal partition based on dynamic programming \cite{Tables,GS03}, or few 
heuristics that do not guarantee any bounds on the efficacy of their computed partition 
\cite{buchsbaum,Tables}, or algorithms that are efficient but work in some specific scenarios 
(such as the Burrows-Wheeler Transform, see e.g. \cite{FGMS05,MN07impl}) and achieve compression 
performance that might be worse than the optimal-partitioning by a $\Omega(\sqrt{\log n})$ factor. 
Therefore, computing efficiently the optimal solution is still open \cite{Giancarlo-Enc}. In this paper 
we provide the first algorithm which is guaranteed to compute in $O(n \log_{1+\eps}n)$ time a partition 
of $T$ whose compressed output is guaranteed to be no more than $(1+\epsilon)$-worse the optimal one, 
where $\epsilon$ may be any positive constant.
\end{abstract}

\section{Introduction}
\label{sec:intro}

Reorganizing data in order to improve the performance of a given compressor $\C$ is a recent 
and important paradigm in data compression (see e.g. \cite{Tables,FGMS05}). The basic idea 
consist of {\em permuting} the input data $T$ to form a new string $T'$ which is then {\em partitioned} 
into substrings $T' = T'_1 T'_2 \cdots T'_k$ that are finally compressed {\em individually} 
by the base compressor $\C$. The goal is to find the best instantiation of the two steps 
Permuting+Partitioning so that the compression of the individual substrings $T'_i$ minimizes 
the total length of the compressed output. This approach (hereafter abbreviated as \PPC) is 
clearly {\em at least} as powerful as the classic data compression approach that applies $\C$ 
to the entire $T$: just take the identity permutation and set $k=1$. The question is whether 
it can be {\em more powerful} than that!

Intuition leads to think favorably about it: by grouping together objects that are ``related'', 
one can hope to obtain better compression even using a very weak compressor $\C$. Surprisingly 
enough, this intuition has been sustained by convincing theoretical and experimental results only 
recently. These results have investigated the \PPC-paradigm under various angles by considering: 
different data formats (strings \cite{FGMS05}, trees \cite{xbw05}, tables \cite{Tables}, etc.), 
different granularities for the items of $T$ to be permuted (chars, node labels, columns, 
blocks \cite{BentleyM01,KulkarniDLT04}, files \cite{BigTable08,SuelM02,SuelTSP}, etc.), different 
permutations (see e.g. \cite{GSR07,VoV07tcs,SuelTSP,BigTable08}), different base compressors to 
be boosted (0-th order compressors, \gzip, \bzip, etc.). Among these plethora of proposals, we 
survey below the most notable examples which are useful to introduce the problem we attack in 
this paper, and refer the reader to the cited bibliography for other interesting results.

The \PPC-paradigm was introduced in \cite{buchsbaum}, and further
elaborated upon in \cite{Tables}. In these papers $T$ is a {\em
table} formed by fixed size columns, and the goal is to permute the
columns in such a way that individually compressing contiguous
groups of them gives the shortest compressed output. The authors of
\cite{Tables} showed that the \PPC-problem in its full generality is
MAX-SNP hard, devised a link between \PPC\ and the classical
asymmetric TSP problem, and then resorted known {\em heuristics} to
find approximate solutions based on several measures of correlations
between the table's columns. For the grouping they proposed either
an optimal but very slow approach, based on Dynamic Programming (see
below), or some very simple and fast algorithms which however did
not have any guaranteed bounds in terms of efficacy of their
grouping process. Experiments showed that these heuristics achieve
significant improvements over the classic \gzip, when it is applied
on the serialized original $T$ (row- or column-wise). Furthermore,
they showed that the combination of the TSP-heuristic with the
DP-optimal partitioning is even better, but it is too slow to be
used in practice even on small file sizes because of the DP-cubic
time complexity.\footnote{Page 836 of \cite{Tables} says: "computing
a good approximation to the TSP reordering before partitioning
contributes significant compression improvement at minimal time
cost. [...] This time is negligible compared to the time to compute
the optimal, contiguous partition via DP."}

When $T$ is a text string, the most famous instantiation of the
\PPC-paradigm has been obtained by combining the Burrows and Wheeler
Transform~\cite{bw} (shortly \BWT) with a context-based grouping of
the input characters, which are finally compressed via proper $0$-th
order-entropy compressors (like \MTF, \RLE, Huffman, Arithmetic, or
their combinations, see e.g. \cite{Witten:1999:MGC}). Here the
\PPC-paradigm takes the name of {\em compression booster}
\cite{FGMS05} because the net result it produces is to boost the
performance of the base compressor $\C$ from $0$-th order-entropy
bounds to $k$-th order entropy bounds, simultaneously over all $k
\geq 0$. In this scenario the permutation acts on single characters,
and the partitioning/permuting steps deploy the context (substring)
following each symbol in the original string in order to identify
``related'' characters which must be therefore compressed together.
Recently \cite{GSR07} investigated whether do exist other
permutations of the characters of $T$ which admit effective
compression and can be computed/inverted fast. Unfortunately they
found a connection between table compression and the \BWT, so that
many natural similarity-functions between contexts turned out to
induce MAX-SNP hard permuting problems! Interesting enough, the
\BWT\ seems to be the unique highly compressible permutation which
is fast to be computed and achieves effective compression bounds.
Several other papers have given an analytic account of this
phenomenon \cite{Manz01,FGM06,KLV07,MN07impl} and have shown, also
experimentally \cite{FGM06b}, that the partitioning of the
BW-transformed data is a key step for achieving effective
compression ratios. Optimal partitioning is actually even more
mandatory in the context of labeled-tree compression where a
BWT-inspired transform, called {\sf XBW}-transform in
\cite{xbw05,xbw-exp06}, allows to produce permuted strings with a
strong clustering effect. Starting from these premises
\cite{GS03} attacked the computation of the optimal
partitioning of $T$ via a DP-approach, which turned to be very
costly; then \cite{FGMS05} (and subsequently many other authors, see
e.g. \cite{FGM06,MN07impl,xbw05}) proposed solutions which are not
optimal but, nonetheless, achieve interesting $k$-th order-entropy
bounds. This
is indeed a subtle point which is frequently neglected when dealing
with compression boosters, especially in practice, and for this
reason we detail it more clearly in Appendix A in which we show an
infinite class of strings for which the compression achieved by the
booster is far from the optimal-partitioning by a multiplicative
factor $\Omega(\sqrt{\log n})$.

Finally, there is another scenario in which the computation of the
optimal partition of an input string for compression boosting can be
successful and occurs when $T$ is a single (possibly long) file on
which we wish to apply classic data compressors, such as \gzip,
\bzip, \ppm, etc. \cite{Witten:1999:MGC}. Note that how much
redundancy can be detected and exploited by these compressors
depends on their ability to ``look back'' at the previously seen
data. However, such ability has a cost in terms of memory usage and
running time, and thus most compression systems provide a facility
that controls the amount of data that may be processed at once ---
usually called the {\em block size}. For example the classic tools
\gzip\ and \bzip\ have been designed to have a small memory
footprint, up to few hundreds KBs. More recent and sophisticated
compressors, like \ppm~\cite{Witten:1999:MGC} and the family of
\BWT-based compressors~\cite{FGM06b}, have been designed to use
block sizes of up to a few hundreds MBs. But using larger blocks to
be compressed at once does not necessarily induce a better
compression ratio! As an example, let us take $\C$ as the simple
Huffman or Arithmetic coders and use them to compress the text $T =
0^{n/2}1^{n/2}$: There is a clear difference whether we compress
individually the two halves of $T$ (achieving an output size of
about $O(\log n)$ bits) or we compress $T$ as a whole (achieving $n
+ O(\log n)$ bits). The impact of the block size is even more
significant as we use more powerful compressors, such as the $k$-th
order entropy encoder $\ppm$ which compresses each symbol according
to its preceding $k$-long context. In this case take
$T=(2^k0)^{n/(2(k+1))} (2^k1)^{n/(2(k+1))}$ and observe that if we
divide $T$ in two halves and compress them individually, the output
size is about $O(\log n)$ bits, but if we compress the entire $T$ at
once then the output size turns to be much longer, i.e.
$\frac{n}{k+1} + O(\log n)$ bits. Therefore the choice of the block
size cannot be underestimated and, additionally, it is made even
more problematic by the fact that it is not necessarily the same
along the whole file we are compressing because it depends on the
distribution of the repetitions within it. This problem is even more
challenging when $T$ is obtained by concatenating a collection of
files via any permutation of them: think to the serialization
induced by the Unix {\tt tar} command, or other more sophisticated
heuristics like the ones discussed in
\cite{SuelM02,BigTable08,OuyangMST02,SuelTSP}. In these cases, the
partitioning step looks for {\em homogeneous} groups of contiguous
files which can be effectively compressed together by the
base-compressor $\C$. More than before, taking the largest
memory-footprint offered by $\C$ to group the files and compress
them at once is not necessarily the best choice because real
collections are typically formed by homogeneous groups of
dramatically different sizes (e.g. think to a Web collection and its
different kinds of pages). Again, in all those cases we could apply
the optimal DP-based partitioning approach of
\cite{GS03,Tables}, but this would take more than cubic time
(in the overall input size $|T|$) thus resulting unusable even on
small input data of few MBs!

\smallskip
In summary the efficient computation of an optimal partitioning of
the input text for compression boosting is an important and still
open problem of data compression (see \cite{Giancarlo-Enc}). The
goal of this paper is to make a step forward by providing the first
efficient approximation algorithm for this problem, formally stated
as follows.

Let $\C$ be the base compressor we wish to boost, and let $T[1,n]$
be the input string we wish to partition and then compress by $\C$.
So, we are assuming that $T$ has been (possibly) permuted in
advance, and we are concentrating on the last two steps of the
\PPC-paradigm. Now, given a partition $\Part$ of the input string
into contiguous substrings, say $T = T_1 T_2 \cdots T_k$, we denote
by $\cost(\Part)$ the cost of this partition and measure it as
$\sum_{i=1}^l |\C(T_i)|$, where $|\C(\alpha)|$ is the length in bit
of the string $\alpha$ compressed by $\C$. The problem of {\em
optimally partitioning} $T$ according to the base-compressor $\C$
consists then of computing the partition $\POpt$ achieving the
minimum cost, namely $\POpt = \min_{\Part} \cost(\Part)$, and thus
the shortest compressed output.\footnote{We are assuming that
$\C(\alpha)$ is a prefix-free encoding of $\alpha$, so that we can
concatenate the compressed output of many substrings and still be
able to recover them via a sequential scan.}

As we mentioned above $\POpt$ might be computed via a
Dynamic-Programming approach \cite{Tables,GS03}. Define $E[i]$ as
the cost of the optimum partitioning of $T[1,i]$, and set $E[0] =
0$. Then, for each $i \geq 1$, we can compute $E[i]$ as the
$min_{0\leq j \leq i-1} E[j] + |\C(T[j+1,i])|$. At the end $E[n]$
gives the cost of $\POpt$, which can be explicitly determined by
standard back-tracking over the DP-array. Unfortunately, this
solution requires to run $\C$ over $\Theta(n^2)$ substrings of
average length $\Theta(n)$, for an overall $\Theta(n^3)$ time cost
in the worst case which is clearly unfeasible even on small input
sizes $n$.

In order to overcome this computational bottleneck we make two
crucial observations: (1) instead of applying $\C$ over each
substring of $T$, we use an entropy-based estimation of $\C$'s
compressed output that can be computed efficiently and incrementally
by suitable dynamic data structures; (2) we relax the requirement
for an exact solution to the optimal partitioning problem, and aim
at finding a partition whose cost is no more than $(1+\eps)$ worse
than $\POpt$, where $\eps$ may be any positive constant. Item (1)
takes inspiration from the heuristics proposed in
\cite{buchsbaum,Tables}, but it is executed in a more principled way
because our entropy-based cost functions reflect the real behavior
of modern compressors, and our dynamic data structures allow the
efficient estimation of those costs without their re-computation
from scratch at each substring (as instead occurred in
\cite{buchsbaum,Tables}). Item (2) boils down to show that the
optimal partitioning problem can be rephrased as a Single Source
Shortest path computation over a weighted {\tt DAG} consisting of
$n$ nodes and $O(n^2)$ edges whose costs are derived from item (1).
We prove some interesting structural properties of this graph that
allow us to restrict the computation of that {\tt SSSP} to a
subgraph consisting of $O(n\log_{1+\eps} n)$ edges only. The
technical part of this paper (see Section \ref{sec:approx}) will
show that we can build this graph on-the-fly as the {\tt
SSSP}-computation proceeds over the {\tt DAG} via the proper use of
time-space efficient dynamic data structures. The final result will
be to show that we can $(1+\eps)$-approximate $\POpt$ in
$O(n\log_{1+\eps} n)$ time and $O(n)$ space, for both $0$-th order
compressors (like Huffman and Arithmetic \cite{Witten:1999:MGC}) and
$k$-th order compressors (like \ppm\ \cite{Witten:1999:MGC}). We
will also extend these results to the class of \BWT-based
compressors, when $T$ is a collection of texts.

We point out that the result on $0$-th order compressors is
interesting in its own from both the experimental side, since {\em
Huffword} compressor is the standard choice for the storage of Web
pages \cite{Witten:1999:MGC}, and from the theoretical side since it
can be applied to the compression booster of \cite{FGMS05} to fast
obtain an approximation of the optimal partition of $\BWT(T)$ in
$O(n\log_{1+\eps} n)$ time. This may be better than the algorithm of
\cite{FGMS05} both in time complexity, since that takes $O(n
|\Sigma|)$ time where $\Sigma$ is the alphabet of $T$, and in
compression ratio (as we have shown above, see Appendix A). The case
of a large alphabet (namely, $|\Sigma| = \Omega({\tt polylog}(n))$) is
 particularly interesting whenever we consider either a
word-based \BWT\ \cite{MI05} or the \XBW-transform over labeled
trees \cite{FGMS05}. Finally, we mention that our results apply also
to the practical case in which the base compressor $\C$ has a
maximum (block) size $B$ of data it can process at once (see above
the case of \gzip, \bzip, etc.). In this situation the time
performance of our solution reduces to $O(n\log_{1+\eps} (B \log
\sigma))$.

The map of the paper is as follows. Section \ref{sec:notation}
introduces some basic notation and terminology. Section
\ref{sec:approx} describes our reduction from the optimal
partitioning problem of $T$ to a {\tt SSSP} problem over a weighted
{\tt DAG} in which edges represent substrings of $T$ and edge costs
are entropy-based estimations of the compression of these substrings
via $\C$. The subsequent Sections will address the problem of
incrementally and efficiently computing those edge costs as they are
needed by the {\tt SSSP}-computation, distinguishing the two cases
of $0$-th order estimators (Section \ref{sec:h0}) and $k$-th order
estimators (Section \ref{sec:hk}), and the situation in which $\C$ is
a \BWT-based compressor and $T$ is a collection of files (Section
\ref{sec:bwt}).

\section{Notation} \label{sec:notation}
In this paper we will use entropy-based upper bounds for the
estimation of $|\C(T[i,j])|$, so we need to recall some basic
notation and terminology about entropies. Let $T[1,n]$ be a string
drawn from the alphabet $\Sigma$ of size $\sigma$. For each $c \in
\Sigma$, we let $n_c$ be the number of occurrences of $c$ in $T$.
The zero-th order {\it empirical} entropy of $T$ is defined as
$\hk{0}(T) = \frac{1}{|T|}\sum_{c \in  \Sigma}^h n_c \: \log
\frac{n}{n_c}$.

Recall that $|T| \hk{0}(T)$ provides an information-theoretic lower
bound to the output size of any compressor that encodes each symbol
of $T$ with a fixed code \cite{Witten:1999:MGC}. The so-called
zero-th order statistical compressors (such as Huffman or Arithmetic
\cite{Witten:1999:MGC}) achieve an output size which is very close
to this bound. However, they require to know information about
frequencies of input symbols (called the {\em model} of the source).
Those frequencies can be either known in advance ({\em static}
model) or computed by scanning the input text ({\em semistatic}
model). In both cases the model must be stored in the compressed
file to be used by the decompressor.

In the following we will bound the compressed size achieved by zero-th
order compressors over $T$ by $|\C_0(T)| \leq \lambda n \hk{0}(T) + f_0(n, \sigma)$ bits,
where $\lambda$ is a positive constant and $f_0(n,\sigma)$ is a function 
including the extra costs of encoding the source
model and/or other inefficiencies of $\C$.
In the following we will assume that the function $f_0(n, \sigma)$ can be
computed in constant time given $n$ and $\sigma$.
As an example, for Huffman $f_0(n, \sigma) = \sigma \log \sigma +n$ bits and $\lambda=1$, and
for Arithmetic $f_0(n, \sigma) = \sigma \log n + \log n / n$ bits and $\lambda = 1$.

In order to evict the cost of the model, we can resort to zero-th order
\textit{adaptive} compressors that do not require to know the symbols' frequencies in advance, since
they are computed incrementally during the compression. The zero-th order
 {\it adaptive empirical} entropy of $T$ \cite{HowardV92} is then defined 
as $\hk{0}^a(T) = \frac{1}{|T|}\sum_{c \in  \Sigma}^h \log \frac{n!}{n_c!}$
We will bound the compress size achieved by zero-th order
adaptive compressors over $T$ by $|\C_0^a(T)| = n \hk{0}^a(T)$ bits.

Let us now come to more powerful compressors. For any string $u$ of
length $k$, we denote by $u_T$ the string of single symbols
following the occurrences of $u$ in~$T$, taken from left to right.
For example, if $T={\tt mississippi}$ and $u={\tt si}$, we have $u_T
= {\tt sp}$ since the two occurrences of {\tt si} in $T$ are
followed by the symbols {\tt s} and {\tt p}, respectively. The
$k$-th order {\em empirical} entropy of $T$ is defined as $\hk{k}(T)
= \frac{1}{|T|} \sum_{u\in \Sigma^{k}} |u_T|\: \hk{0}(u_T)$. Analogously, 
the $k$-th order {\em adaptive empirical} entropy of $T$ is defined as 
$\hk{k}^a(T) = \frac{1}{|T|} \sum_{u\in \Sigma^{k}} |u_T|\:\hk{0}^a(u_T)$

We have $H_k(T) \geq H_{k+1}(T)$ for any $k\geq 0$. As usual in
data compression \cite{Manz01}, the value $n H_k(T)$ is an
information-theoretic lower bound to the output size of any
compressor that encodes each symbol of $T$ with a fixed code that
depends on the symbol itself and on the $k$ immediately preceding
symbols.  Recently (see e.g.
\cite{koma00,Manz01,FGMS05,FGM06,MN07impl,xbw05} and refs therein)
authors have provided {\em upper bounds} in terms of $H_k(|T|)$ for
sophisticated data-compression algorithms, such as \gzip\
\cite{koma00}, \bzip\ \cite{Manz01,FGMS05,KLV07}, and \ppm. These
bounds have the form $|\C(T)| \leq \lambda |T| \: H_k(T) + f_k(|T|,
\sigma)$, where $\lambda$ is a positive constant and $f_k(|T|,
\sigma)$ is a function including the extra-cost of encoding the
source model and/or other inefficiencies of $\C$. The smaller are
$\lambda$ and $f_k()$, the better is the compressor $\C$. As an
example, the bound of the compressor in \cite{MN07impl} has $\lambda=1$ and $f(|T|,
\sigma) = O(\sigma^{k+1} \log |T| + |T| \log \sigma \log \log |T|/ \log |T|)$.
Similar bounds that involve the adaptive $k$-th order entropy are 
known \cite{Manz01,FGMS05,FGM06} for many compressors. In these cases 
the bound takes the form $|\C_k ^a(T)| = \lambda|T|\hk{k}^*(T)+ g_k(\sigma)$ 
bits, where the value of $g_k$ depends only on the alphabet size $\sigma$.

In our paper we will use these entropy-based bounds for the
estimation of $|\C(T[i,j])|$, but of course this will not be
enough to achieve a fast DP-based algorithm for our optimal-partitioning
problem. We cannot re-compute from scratch those estimates
for every substring $T[i,j]$ of $T$, being them $\Theta(n^2)$ in number. So we will show some
structural properties of our problem (Section \ref{sec:approx}) and
introduce few novel technicalities (Sections
\ref{sec:h0}--\ref{sec:hk}) that will allow us to compute
$H_k(T[i,j])$ only on a {\em reduced} subset of $T$'s substrings, having size $O(n \log_{1+\epsilon} n)$, by taking
$O({\tt polylog}(n))$ time per substring and $O(n)$ space overall.

\section{The problem and our solution} \label{sec:approx}

The optimal partitioning problem, stated in Section
\ref{sec:intro} can be reduced to a single source shortest path
computation (SSSP) over a directed acyclic graph $\G(T)$
defined as follows. The graph $\G(T)$ has a vertex $v_i$
for each text position $i$ of $T$, plus an additional vertex
$v_{n+1}$ marking the end of the text, and an edge connecting vertex
$v_i$ to vertex $v_j$ for any pair of indices $i$ and $j$ such that
$i < j$. Each edge $(v_i,v_j)$ has associated the cost $c(v_i,v_j) =
|\C(T[i,j-1])|$ that corresponds to the size in bits of the
substring $T[i,j-1]$ compressed by $\C$. We remark the following
crucial, but easy to prove, property of the cost function defined on
$\G(T)$:

\begin{fact}\label{prop:increasing}
 For any vertex $v_i$, it is $0 < c(v_i,v_{i+1}) \leq c(v_i,v_{i+2}) \leq \ldots \leq c(v_i,v_{n+1})$
\end{fact}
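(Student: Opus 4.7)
The statement naturally splits into two separate claims: positivity of $c(v_i, v_{i+1})$, and the non-decreasing chain $c(v_i, v_{i+1}) \leq c(v_i, v_{i+2}) \leq \cdots$. The plan is to dispatch positivity in one line and then reduce monotonicity to a convexity argument about the entropy-based upper bounds introduced in Section \ref{sec:notation}, which are the quantities that actually play the role of edge costs throughout the rest of the paper.

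For positivity, I would simply observe that $c(v_i, v_{i+1}) = |\C(T[i,i])|$ is the cost of compressing a single symbol; any decodable (hence injective) encoder must emit a nonempty bitstring on a nonempty input, or else the empty string and the symbol $T[i]$ would share the same codeword. Hence $c(v_i, v_{i+1}) \geq 1 > 0$.

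For monotonicity, I would argue directly on the upper bound $\lambda n \hk{k}(T) + f_k(n,\sigma)$, showing that both summands are non-decreasing when $T$ is extended on the right by a single character $c'$. The term $f_k(n,\sigma)$ is monotone in $n$ for all concrete compressors mentioned in Section \ref{sec:notation}, so all the work is in the entropy term. For $k=0$, writing $n\hk{0}(T) = n \log n - \sum_c n_c \log n_c$, appending $c'$ produces the variation $[f(n+1)-f(n)] - [f(n_{c'}+1)-f(n_{c'})]$ where $f(x)=x\log x$. Since $f$ is convex, its forward difference $f(x+1)-f(x)$ is non-decreasing in $x$, and $n \geq n_{c'}$ immediately gives that this variation is non-negative. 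For $k>0$, I would note that appending $c'$ affects the decomposition $n\hk{k}(T) = \sum_u |u_T|\,\hk{0}(u_T)$ only in the single term corresponding to the context $u = T[n-k+1, n]$, which is the unique $k$-gram whose list $u_T$ gains a new entry (namely $c'$); the variation of that single term reduces exactly to the previous zero-th order computation applied to the string $u_T$, hence is non-negative. The adaptive variants $\hk{k}^a$ are handled identically, and in fact telescoping factorials collapses the variation to $\log((|u_T|+1)/(n^u_{c'}+1)) \geq 0$, which is manifestly non-negative.

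The only slightly delicate point, and the step I would write out most carefully, is the $k$-th order bookkeeping: checking that extending $T$ by one symbol leaves every other $k$-gram's successor list untouched, so that the global change in $n\hk{k}$ really is concentrated on a single summand to which the zero-th order convexity argument applies verbatim.
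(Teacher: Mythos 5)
The paper gives no proof of this fact; it is simply asserted as ``crucial, but easy to prove,'' so there is no printed argument against which to compare your route. Your proof is correct and, usefully, surfaces something the paper glosses over: as literally stated, $c(v_i,v_j)=|\C(T[i,j-1])|$ is the output size of an arbitrary prefix-free compressor, for which monotonicity under right-extension is simply \emph{false} in general (nothing stops a compressor from assigning a shorter codeword to some longer string). You correctly pivot to what the rest of the paper actually uses as edge weights, namely the entropy surrogate $\lambda\,n\,\hk{k}(T)+f_k(n,\sigma)$, and prove that \emph{this} is monotone. The convexity observation --- writing $n\hk{0}(T)=f(n)-\sum_c f(n_c)$ with $f(x)=x\log x$, so that appending $c'$ changes it by $[f(n{+}1)-f(n)]-[f(n_{c'}{+}1)-f(n_{c'})]\geq 0$ because $f$ is convex and $n\geq n_{c'}$ --- is exactly the right tool, and your bookkeeping for $k>0$ correctly localizes the change in $\sum_u |u_T|\hk{0}(u_T)$ to the single context $u=T[n-k+1,n]$ that gains a successor, reducing it to the $k=0$ computation applied to $u_T$. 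The adaptive variant telescopes as you say. Two small points worth spelling out if you were to print this: (i) the model-cost term $f_k$ depends on $|\Sigma_{T[l,r_i]}|$, which can only grow as $r_i$ does, so you need $f_k$ monotone in \emph{both} arguments, which holds for all the concrete $f_k$ the paper lists but is an assumption on $\C$, not a free fact; and (ii) $f_k(1,1)>0$ is what supplies positivity of the first edge once you switch to the entropy surrogate, since $\hk{k}$ of a single character is $0$ --- your injectivity argument proves positivity for the literal $|\C(\cdot)|$ cost, which is a slightly different object.
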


There is a one-to-one correspondence between paths from $v_1$ to $v_{n+1}$ in $\G(T)$ 
and partitions of $T$: every edge $(v_i,v_j)$ in the path identifies a contiguous 
substring $T[i,j-1]$ of the corresponding partition. Therefore the cost of a path 
is equal to the (compression-)cost of the
corresponding partition. Thus, we can find the optimal partition of $T$  by computing 
the shortest path in $\G(T)$ from $v_1$ to $v_{n+1}$.
Unfortunately this simple approach has two main drawbacks:
\begin{smallenumerate}
\item the number of edges in $\G(T)$ is $\Theta(n^2)$, thus making the SSSP computation
inefficient (i.e. $\Omega(n^2)$ time) if executed directly over
$\G(T)$;

\item the computation of the each edge cost might take $\Theta(n)$ time over most $T$'s substrings,
if $\C$ is run on each of them from scratch.
\end{smallenumerate}

In the following sections we will successfully address  both these two
drawbacks. First, we sensibly reduce the number of edges in the
graph $\G(T)$ to be examined during the SSSP computation and show
that we can obtain a $(1+\epsilon)$ approximation using only $O(n
\log_{1+\epsilon} n)$ edges, where $\epsilon > 0$ is a user-defined
parameter (Section \ref{sub:pruning}). Second, we show some sufficient
properties that $\C$ needs to satisfy in order to compute
efficiently every edge's cost. These properties hold for some
well-known compressors--- e.g. $0$-order compressors,
\texttt{PPM}-like and \texttt{bzip}-like compressors--- and for them
we show how to compute each edge cost in constant or polylogarithmic
time (Sections \ref{sec:h0}---\ref{sec:bwt}).

\subsection{A pruning strategy}
\label{sub:pruning}

The aim of this section is to design a {\em pruning} strategy that produces a subgraph $\PG(T)$ of 
the original DAG $\G(T)$ in which the shortest
path distance between its leftmost and rightmost nodes, $v_1$ and
$v_{n+1}$, increases by no more than a factor $(1+\eps)$.
We define $\PG(T)$ to contain all edges $(v_i,v_j)$ of $\G(T)$, recall $i
< j$, such that at least one of the following two conditions holds:

\begin{smallenumerate}
 \item there exists a positive integer $k$ such that $c(v_i,v_j) \leq (1+\eps)^k < c(v_i,v_{j+1})$;
 \item $j = n+1$.
\end{smallenumerate}

In other words, by fact \ref{prop:increasing}, we are keeping for
each integer $k$ the edge of $\G(T)$ that approximates at the best
the value $(1+\eps)^k$ from below. Given this, we will call {\em
$\eps$-maximal} the edges of $\PG(T)$. Clearly, each vertex of
$\PG(T)$ has at most $\log_{1+\eps}n = O(\frac{1}{\eps}\log n)$
outgoing edges, which are $\eps$-maximal by definition. Therefore
the total size of $\PG(T)$ is at most $O(\frac{n}{\eps}\log n)$.
Hereafter, we will denote with $d_G(-,-)$ the shortest path distance
between any two nodes in a graph $G$.

The following lemma states a basic property of shortest path
distances over our special DAG $\G(T)$:

\begin{lemma}\label{lem:dist}
For any triple of indices $1 \leq i \leq j \leq q \leq n+1$ we have:
 \begin{enumerate}
  \item $d_{\G(T)}(v_j,v_q) \leq d_{\G(T)}(v_i,v_q)$
  \item $d_{\G(T)}(v_i,v_j) \leq d_{\G(T)}(v_i,v_q)$
 \end{enumerate}
\end{lemma}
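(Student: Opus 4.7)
My plan is to prove both inequalities by direct path-surgery arguments that invoke Fact~\ref{prop:increasing}. In each case I start from an arbitrary shortest path realizing the right-hand side and exhibit a path realizing the left-hand side whose cost is no greater.

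For part~2, fix any shortest path $v_i = v_{i_0} \to v_{i_1} \to \cdots \to v_{i_m} = v_q$. Since $i_0 = i \le j \le q = i_m$, let $\ell$ be the smallest index with $i_\ell \ge j$. I truncate the path to $v_{i_0} \to \cdots \to v_{i_{\ell-1}} \to v_j$ (the trivial case $i_\ell = j$ is handled by just using the prefix up to $v_{i_\ell}$). By Fact~\ref{prop:increasing} applied at starting vertex $v_{i_{\ell-1}}$, the modified last edge satisfies $c(v_{i_{\ell-1}}, v_j) \le c(v_{i_{\ell-1}}, v_{i_\ell})$ because $j \le i_\ell$. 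Hence the truncated path has cost at most that of the prefix of the original through $v_{i_\ell}$, which in turn is bounded by the full path cost since all dropped edges have strictly positive cost (again by Fact~\ref{prop:increasing}).

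For part~1, I perform the symmetric surgery at the left end. With the same $\ell$, either $i_\ell = j$ and the suffix $v_{i_\ell} \to \cdots \to v_q$ of the path is already a valid walk from $v_j$ to $v_q$ of smaller cost, or $i_\ell > j$ and I prepend the edge $(v_j, v_{i_\ell})$ to that suffix. In the latter case the bound I need is $c(v_j, v_{i_\ell}) \le c(v_{i_{\ell-1}}, v_{i_\ell})$, which compares the compressed cost of $T[j, i_\ell - 1]$ with that of the longer string $T[i_{\ell-1}, i_\ell - 1]$ that contains it as a suffix. This is the mirror of Fact~\ref{prop:increasing}: compressing a suffix of a string is no more expensive than compressing the whole string. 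It follows immediately from the entropy-based upper bounds recalled in Section~\ref{sec:notation} for every base compressor $\C$ considered in the paper, so granted this the new path has total cost at most $c(v_{i_{\ell-1}}, v_{i_\ell}) + \sum_{k=\ell}^{m-1} c(v_{i_k}, v_{i_{k+1}})$, which is bounded by the cost of the original path.

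The main obstacle is conceptual rather than combinatorial: Fact~\ref{prop:increasing} as written gives only right-monotonicity of the cost function, whereas part~1 genuinely needs the analogous left-monotonicity (shortening the substring on the left side cannot increase its compressed size). Once both monotonicities are available, each inequality collapses to a single-edge swap plus the observation that discarded edges have positive cost.
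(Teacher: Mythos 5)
Your proof is correct and is essentially the paper's own argument: a single path-surgery at one end of a shortest path, justified by monotonicity of the edge costs plus positivity of the dropped edges. The organization differs slightly — the paper proves only part 1, reducing by induction to the case $j=i+1$ and replacing the first edge $(v_i,w_1)$ by $(v_j,w_1)$, and dismisses part 2 as ``symmetric'', whereas you handle general $j$ in one step and give part 2 its own truncation argument, which is the cleaner half since it needs only Fact~\ref{prop:increasing} as literally stated. Concerning the obstacle you flag: you are right that part 1 needs left-monotonicity, $c(v_j,w)\le c(v_i,w)$ for $i\le j$, which is not the inequality written in Fact~\ref{prop:increasing}; but the paper's proof makes exactly this claim (``$c(v_j,w_1)\le c(v_i,w_1)$ since $i\le j$'') and cites Fact~\ref{prop:increasing} for it, i.e.\ the authors implicitly read the fact as saying the edge cost is monotone in the underlying substring (compressing less data never costs more bits). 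Your appeal to the entropy-based cost estimates to justify that step is therefore at least as careful as the paper's own treatment, and your argument has no gap relative to it.
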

\begin{proof}
 We prove just 1, since 2 is symmetric. It suffices by induction to prove the 
case $j=i+1$. Let $(v_i, w_1)(w_1, w_2)...(w_{h-1},w_h)$, with $w_h = v_q$, be 
a shortest path in $\G(T)$ from $v_i$ to $v_q$. By fact \ref{prop:increasing}, 
$c(v_j,w_1) \leq c(v_i,w_1)$ since $i \leq j$. Therefore the cost of the path
$(v_j, w_1)(w_1, w_2)...(w_{h-1},w_h)$ is at most $d_{\G(T)}(v_i,v_q)$, which 
proves the claim.
\end{proof}

\noindent
The correctness of our pruning strategy relies on the following theorem:

\begin{theorem}\label{teo:pruning}
For any text $T$, the shortest path in $\PG(T)$ from $v_1$ to $v_{n+1}$ has 
a total cost of at most $(1+\eps)\; d_{\G(T)}(v_1,v_{n+1})$.
\end{theorem}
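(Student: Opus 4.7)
My plan is to prove, by reverse induction on $i$, the strengthened statement $d_{\PG(T)}(v_i,v_{n+1}) \leq (1+\eps)\,d_{\G(T)}(v_i,v_{n+1})$ for every $i \in \{1,\ldots,n+1\}$; taking $i=1$ recovers the theorem. The base case $i=n+1$ is immediate since both distances are zero. For the inductive step, I pick $v_{j^*}$ to be the immediate successor of $v_i$ on some shortest $v_i$-to-$v_{n+1}$ path in $\G(T)$, and write $c^* = c(v_i,v_{j^*})$, so that $d_{\G(T)}(v_i,v_{n+1}) = c^* + d_{\G(T)}(v_{j^*},v_{n+1})$.

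The heart of the argument is to exhibit an edge $(v_i,v_y)$ of $\PG(T)$ with $y \geq j^*$ and $c(v_i,v_y) \leq (1+\eps)\,c^*$. I would let $k$ be the smallest positive integer with $(1+\eps)^k \geq c^*$ and take $y$ to be the largest index with $c(v_i,v_y) \leq (1+\eps)^k$. By Fact~\ref{prop:increasing} and the choice of $k$, $y \geq j^*$. Either $y = n+1$, in which case the edge belongs to $\PG(T)$ by the second defining condition; or $y < n+1$, so that $c(v_i,v_{y+1}) > (1+\eps)^k$ and hence $(v_i,v_y)$ is $\eps$-maximal and belongs to $\PG(T)$ by the first condition. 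In either case, minimality of $k$ yields $c(v_i,v_y) \leq (1+\eps)^k \leq (1+\eps)\,c^*$ (the borderline situation where $c^*\leq 1$ and the last inequality is not immediate can be absorbed either by the direct edge to $v_{n+1}$ guaranteed by the second defining condition, or by appealing to the fact that compression costs in all the applications considered in the paper are bounded below by a constant).

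Having secured such an edge, I combine three ingredients: the cost bound just established, Lemma~\ref{lem:dist}(1) which gives $d_{\G(T)}(v_y,v_{n+1}) \leq d_{\G(T)}(v_{j^*},v_{n+1})$ because $y \geq j^*$, and the inductive hypothesis at $v_y$ (applicable because $y > i$). Chaining them gives
\[
d_{\PG(T)}(v_i,v_{n+1}) \leq c(v_i,v_y) + d_{\PG(T)}(v_y,v_{n+1}) \leq (1+\eps)\,c^* + (1+\eps)\,d_{\G(T)}(v_{j^*},v_{n+1}) = (1+\eps)\,d_{\G(T)}(v_i,v_{n+1}),
\]
which closes the induction.

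The main obstacle is the existence/quality step for the surrogate edge $(v_i,v_y)$: one must verify that the $\eps$-maximality buckets (together with the fallback edge to $v_{n+1}$) always provide an edge that simultaneously reaches at least as far as $v_{j^*}$, so that Lemma~\ref{lem:dist} can absorb the slack, and costs no more than a $(1+\eps)$-factor over $c^*$. Everything else is routine telescoping along the optimal path in $\G(T)$, with Fact~\ref{prop:increasing} driving the monotonicity arguments.
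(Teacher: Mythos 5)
Your proof is correct and follows essentially the same route as the paper's: the same strengthened claim $d_{\PG(T)}(v_i,v_{n+1}) \leq (1+\eps)\,d_{\G(T)}(v_i,v_{n+1})$ proved by (reverse) induction, replacing the first edge of an optimal path in $\G(T)$ by an $\eps$-maximal edge that reaches at least as far and costs at most a factor $(1+\eps)$ more, then combining Fact~\ref{prop:increasing}, Lemma~\ref{lem:dist} and the triangle inequality in $\PG(T)$. You simply make explicit the construction of that surrogate edge (and flag the borderline case $c^*<1$, which does not arise here since edge costs are bit-lengths, hence at least $1$) that the paper leaves implicit.
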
\begin{proof}
We prove a stronger assertion: $d_{\PG(T)}(v_i,v_{n+1}) \leq (1+\eps)\; d_{\G(T)}(v_i,v_{n+1})$ for
any index $1 \leq i \leq n+1$.
This is clearly true for $i=n+1$, because in that case the distance is 0. Now let us inductively consider the shortest path $\pi$ in $\G(T)$ from $v_i$ to $v_{n+1}$ and
let $(v_{k},v_{t_1})(v_{t_1},v_{t_2}) \dots (v_{t_{h}}v_{n+1})$
be its edges. By the definition of $\eps$-maximal edge,
it is possible to find an $\eps$-maximal edge $(v_k,v_r)$ with $t_1 \leq r$, such that $c(v_k,v_{r}) \leq (1+\eps)\; c(v_k,v_{t_1})$. By Lemma \ref{lem:dist},
$d_{\G(T)}(v_r,v_{n+1}) \leq d_{\G(T)}(v_{t_1},v_{n+1})$. By induction,
$d_{\PG(T)}(v_{r}, v_{n+1}) \leq (1+\eps)\; d_{\G(T)}(v_r,v_{n+1})$. Combining this with the
 triangle inequality we get the thesis.\end{proof}

\subsection{Space and time efficient algorithms for generating $\PG(T)$}

Theorem \ref{teo:pruning} ensures that, in order to compute a
$(1+\eps)$ approximation of the optimal partition of $T$, it
suffices to compute the SSSP in $\PG(T)$ from $v_1$ to $v_{n+1}$.
This can be easily computed in $O(|\PG(T)|) = O(n\log_{\eps}
n)$ time since $\PG(T)$ is a DAG \cite{CLR}, by making a single pass
over its vertices and relaxing all edges going out from the current
one.

However, generating $\PG(T)$ in efficient time is a non-trivial task
for three main reasons. First, the original graph $\G(T)$
contains $\Omega(n^2)$ edges, so that we cannot check each
of them to determine whether it is $\eps$-maximal or not, because
this would take $\Omega(n^2)$ time. Second, we cannot compute the cost of an edge
$(v_i,v_j)$ by executing $\C(T[i,j-1])$ {\em from scratch}, since
this would require time linear in the substring length, and thus
$\Omega(n^3)$ time over all $T$'s substrings. Third, we cannot
materialize $\PG(T)$ (e.g. its adjacency lists) because it consists
of $\Theta(n \; {\tt polylog}(n))$ edges, and thus its space
occupancy would be super-linear in the input size.

The rest of this section is devoted to design an algorithm which
overcomes the three limitations above. The specialty of our
algorithm consists of materializing $\PG(T)$ on-the-fly, as its
vertices are examined during the SSSP-computation, by spending only
polylogarithmic time per edge. The actual time complexity per edge
will depend on the entropy-based cost function we will use to
estimate $|\C(T[i,j-1])|$ (see Section \ref{sec:notation}) and on
the dynamic data structure we will deploy to compute that estimation
efficiently.

The key tool we use to make a fast estimation of the edge costs
is a dynamic data structure built over the input text $T$ and
requiring $O(|T|)$ space. We state the main properties of this
data structure in an abstract form, in order to design a general framework for
solving our problem; in the next sections we will then provide
implementations of this data structure and thus obtain real
time/space bounds for our problem. So, let us assume to have a
dynamic data structure that maintains a set of {\em sliding windows}
over $T$ denoted by $w_1, w_2, \dots, w_{\log_{1+\epsilon} n}$. The
sliding windows are substrings of $T$ which start at the same text
position $l$ but have different lengths: namely, $w_i = T[l,r_i]$
and $r_1 \leq r_2 \leq \ldots \leq r_{\log_{1+\epsilon} n}$. The
data structure must support the following three operations:

\begin{smallenumerate}
 \item $\rem()$ moves the starting position $l$ of all windows one position to the right (i.e. $l+1$);
 \item $\app(w_i)$ moves the ending position of the window $w_i$ one position to the right (i.e. $r_i+1$);
 \item $\len(w_i)$ computes and returns the value $|\C(T[l,r_i])|$.
\end{smallenumerate}

This data structure is enough to generate $\eps$-maximal edges via a
single pass over $T$, using $O(|T|)$ space. More precisely, let
$v_l$ be the vertex of $\G(T)$ currently examined by our SSSP
computation, and thus $l$ is the current position reached by our
scan of $T$. We maintain the following invariant: the sliding windows
correspond to all $\eps$-maximal edges going out from $v_l$, that is, the edge
$(v_l,v_{1+r_t})$ is the $\eps$-maximal edge satisfying
$c(v_l,v_{1+r_t}) \leq (1+\eps)^t < c(v_l,v_{1+(r_{t}+1)})$.
Initially all indices are set to $0$.
To maintain the invariant, when the text scan advances to the
next position $l+1$, we call operation $\rem()$ once to increment
index $l$ and, for each $t=1,\ldots, \log_{1+\eps} (n)$, we call
operation $\app(w_t)$ until we find the largest $r_t$ such that
$\len(w_t) = c(v_l,v_{1+r_t}) \leq (1+\eps)^t$. The key issue here
is that $\app$ and $\len$ are paired so that our data structure
should take advantage of the rightward sliding of $r_t$ for
computing $c(v_l,v_{1+r_t})$ efficiently. Just one character is
entering $w_t$ to its right, so we need to deploy this fact for
making the computation of $\len(w_t)$ fast (given its previous
value). Here comes into play the second contribution of our paper
that consists of adopting the entropy-bounded estimates for the
compressibility of a string, mentioned in Section \ref{sec:notation},
to estimate indeed the edge costs $\len(w_t)=|\C(w_t)|$. This idea
is crucial because we will be able to show that these functions do
satisfy some structural properties that admit a {\em fast
incremental computation}, as the one required by $\app + \len$.
These issues will be discussed in the following sections, here we
just state that, overall, the SSSP computation over $\PG(T)$ takes
$O(n)$ calls to operation $\rem$, and $O(n \log_{1+\eps} n)$ calls to
operations $\app$ and $\len$.

\begin{theorem} \label{th:final}
   If we have a dynamic data structure occupying $O(n)$ space and supporting operation $\rem$ in time $L(n)$, 
and operations $\app$ and $\len$ in time $R(n)$, then we can compute the shortest path in $\PG(T)$ from
 $v_1$ to $v_{n+1}$ taking $O(n\; L(n) + (n \log_{1+\eps}n)\; R(n))$ time and $O(n)$ space.
\end{theorem}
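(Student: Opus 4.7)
The plan is to execute a single left-to-right pass over $v_1, v_2, \ldots, v_{n+1}$, which is already the topological order of $\PG(T)$, and relax the outgoing $\eps$-maximal edges of the current vertex exactly as in the textbook DAG-SSSP algorithm. Storing the shortest-path distances needs $O(n)$ space, the relaxation of one edge takes $O(1)$ time, and $\PG(T)$ has $O(n \log_{1+\eps} n)$ edges, so the relaxation work is $O(n \log_{1+\eps} n)$ overall. The catch is that $\PG(T)$ cannot be precomputed or stored explicitly; the edges must instead be produced on demand using the dynamic data structure, synchronously with the scan.

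To this end I would maintain the invariant that, while $v_l$ is being processed, for every $t = 1, \ldots, \log_{1+\eps} n$ the window $w_t = T[l, r_t]$ satisfies $\len(w_t) \leq (1+\eps)^t$ and advancing its right endpoint by one would violate this inequality. By Fact \ref{prop:increasing} and the definition of $\eps$-maximal edge, this says precisely that $(v_l, v_{1+r_t})$ is the $\eps$-maximal edge whose cost lies in $((1+\eps)^{t-1}, (1+\eps)^t]$. Thus one $\len$ call per window returns the cost needed by SSSP, and an externally maintained counter $r_t$ provides its endpoint; together they let the relaxation step run in $O(1)$ per edge on top of the $\len$ query.

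The invariant is re-established when the scan moves from $v_l$ to $v_{l+1}$ in two stages: first call $\rem()$ once to shift the common left endpoint; then, for each $t$ in turn, repeatedly call $\app(w_t)$ followed by $\len(w_t)$, stopping as soon as $\len(w_t) > (1+\eps)^t$ (or $r_t$ reaches $n$), and keep as the new $r_t$ the last index for which the threshold was still satisfied. The main obstacle, and where I would spend most of the care, is the amortized bound on $\app$ calls: one needs to argue that shrinking a substring on the left cannot push its compressed cost above a threshold it already satisfied, so each pointer $r_t$ moves monotonically rightward across the whole scan. Once this monotonicity is in hand, the total number of $\app(w_t)$ calls over all $n$ iterations of $l$ is $O(n)$ for each $t$, and hence $O(n \log_{1+\eps} n)$ over all $t$; the number of $\len$ calls matches this bound, since each $\app$ is paired with a constant number of $\len$ probes and at most one additional $\len$ per pair $(l,t)$ is used to report the edge cost to the SSSP routine.

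Summing the contributions yields $n$ calls to $\rem$, costing $O(n\, L(n))$, and $O(n \log_{1+\eps} n)$ calls to $\app$ and $\len$, costing $O((n \log_{1+\eps} n)\, R(n))$. The $O(n \log_{1+\eps} n)$ relaxations are dominated by the latter term, and the working space is the $O(n)$ used by the data structure plus the $O(n)$ used by the distance array, which matches the time and space bounds in the statement.
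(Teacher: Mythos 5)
Your proposal is correct and is essentially the paper's own argument: a single topological-order pass over the DAG in which the sliding windows materialize the $\eps$-maximal edges on the fly, with one $\rem$ per text position and an amortized $O(n)$ bound on the $\app$/$\len$ calls per window level because each right endpoint $r_t$ only advances. The monotonicity you flag as the delicate point --- that dropping the leftmost symbol cannot raise the estimated cost, so the old $r_t$ still satisfies its threshold after a $\rem$ --- is precisely the property the paper also relies on (it is the same left-endpoint monotonicity invoked in the proof of Lemma \ref{lem:dist} and guaranteed by the entropy-based estimates), so your argument matches the paper's with no genuine gap.
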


\section{On zero-th order compressors}
\label{sec:h0}

In this section we explain how to implement the data structure above whenever $\C$ is a $0$-th order compressor, 
and thus $H_0$ is used to provide a bound to the compression cost of $\G(T)$'s edges (see Section \ref{sec:notation}). 
The key point is actually to show how to efficiently compute $\len(w_i)$ as the sum 
of $|T[l,r_i]|\hk{0}(T[l,r_i])= \sum _{c \in \Sigma} n_c \log ((r_i-l+1)/n_c)$ (see 
its definition in Section \ref{sec:notation}) plus $f_0(r_i-l+1, |\Sigma_{T[l,r_i]}|)$,
where $n_c$ is the number of occurrences of symbol $c$ in $T[l,r_i]$ and $|\Sigma_{T[l,r_i]}|$
denotes the number of different symbols in $T[l,r_i]$.

The first solution we are going to present is very simple and uses $O(\sigma)$ space
per window. The idea is the following: for each window $w_i$ we keep in memory an array of counters $A_i[c]$
indexed by symbol $c$ in $\Sigma$.
At any step of our algorithm, the counter $A_i[c]$ stores the number of
occurrences of symbol $c$ in $T[l,r_i]$. For any window $w_i$, we also use a variable $E_i$ 
that stores the value of $\sum _{c \in \Sigma} A_i[c] \log A_i[c]$.
It is easy to notice that:

\begin{equation} \label{Ei}
|T[l,r_i]|\; \hk{0}(T[l,r_i]) = (r_i-l+1) \log (r_i-l+1) - E_i.
\end{equation}

Therefore, if we know the value of $E_i$, we can answer to a query
$\len(w_i)$ in constant time. So, we are left with showing how to
implement efficiently the two operations that modify $l$ or any $r$s
value and, thus, modify appropriately the $E$'s value. This can be
done as follows:

\begin{smallenumerate}
\item \rem$()$: For each window $w_i$, we subtract from the appropriate counter and from variable $E_i$
the contribution of the symbol $T[l]$ which has been evicted from the window.
That is, we decrease $A_i[T[l]]$ by one, and update $E_i$ by subtracting
$(A_i[T[l]]+1) \log (A_i[T[l]]+1)$ and then summing $A_i[T[l]] \log A_i[T[l]]$.
Finally we set $l=l+1$.

\item \app$(w_i)$:  We add to the appropriate counter and variable $E_i$ the contribution of the symbol $T[r_i+1]$
which has been appended to window $w_i$.
That is, we increase $A_i[T[r+1]]$ by one, then we update $E_i$ by subtracting 
$(A[T[r_i+1]]-1) \log (A[T[r_i+1]]-1)$ and summing $A[T[r_i+1]] \log A[T[r_i+1]]$. 
Finally we set $r_i=r_i+1$.
\end{smallenumerate}

In this way, operation $\rem$ requires constant time per window,
hence $O(\log_{1+\eps} n)$ time overall. $\app(w_i)$ takes constant
time. The space required by the counters $A_i$ is $O(\sigma
\log_{1+\eps} n)$ words. Unfortunately, the space complexity of this
solution can be too much when it is used as the basic-block for
computing the $k$-th order entropy of $T$  (see Section
\ref{sec:notation}) as we will do in Section \ref{sec:hk}. In fact,
we would achieve $\min( \sigma^{k+1} \log_{1+\eps} n , n
\log_{1+\eps} n )$ space, which may be superlinear in $n$ depending
on $\sigma$ and $k$.

The rest of this section is therefore devoted to provide an
implementation of our dynamic data structure that takes the same
query time above for these three operations, but within $O(n)$
space, which is independent of $\sigma$ and $k$. The new solution
still uses $E$'s value but the counters $A_i$ are computed
on-the-fly by exploiting the fact that all windows share the same
value of $l$. We keep an array $B$ indexed by symbols whose entry
$B[c]$ stores the number of occurrences of $c$ in $T[1,l]$. We can
keep these counters updated after a $\rem$ by simply decreasing
$B[T[l]]$ by one. We also maintain an array $R$ with an entry for
each text position. The entry $R[j]$ stores the number of
occurrences of symbol $T[j]$ in $T[1,j]$. The number of elements in
both $B$ and $R$ is no more than $n$, hence they take $O(n)$ space.

These two arrays are enough to correctly update the value $E_i$
after $\app(w_i)$, which is in turn enough to estimate $H_0$ (see
Eqn \ref{Ei}). In fact, we can compute the value $A_i[T[r_i+1]]$ by
computing $R[r_i+1] - B[T[r_i+1]]$ which correctly reports the
number of occurrences of $T[r_i+1]$ in $T[l \dots r_i+1]$. Once we
have the value of $A_i[T[r_i+1]]$, we can update $E_i$ as explained
in the above item $2$.

We are left with showing how to support $\rem()$ whose computation
requires to evaluate the value of $A_i[T[l]]$ for each window $w_i$.
Each of these values can be computed as $R[t] - B[T[l]]$ where $t$
is the last occurrence of symbol $T[l]$ in $T[l,r_i]$. The problem
here is given by the fact that we do not know the position $t$. We
solve this issue by resorting to a doubly linked list $L_c$ for each
symbol $c$. The list $L_c$ links together the last occurrences of
$c$ in all those windows, ordered by increasing position. Notice
that a position $j$ may be the last occurrence of symbol $T[j]$ for
different (but consecutive) windows. In this case we force that
position to occur in $L_{T[j]}$ just once. These lists are
sufficient to compute values $A_i[T[l]]$ for all the windows
together. In fact, since any position in $L_{T[l]}$ is the last
occurrence of at least one sliding window, each of them can be used
to compute $A_i[T[l]]$ for the appropriate indices $i$. Once we have
all values $A_i[T[l]]$, we can update all $E_i$'s as explained in
the above item $1$. Since list $L_{T[l]}$ contains no more than
$\log_{1+\eps} n$ elements, all $E$s can be updated in
$O(\log_{1+\eps} n)$ time. Notice that the number of elements in all
the lists $L$ is bounded by the text length. Thus, they are stored
using $O(n)$ space.

It remains to explain how to keep lists $L$ correctly updated.
Notice that only one list may change after a $\rem()$ or an
$\app(w_i)$. In the former case we have possibly to remove position
$l$ from list $L_{T[l]}$. This operation is simple because, if that
position is in the list, then $T[l]$ is the last occurrence of that
symbol in $w_1$ (recall that all the windows start at position $l$,
and are kept ordered by increasing ending position) and, thus, it
must be the head of $L_{T[l]}$. The case of $\app(w_i)$ is more
involved. Since the ending position of $w_i$ is moved to the right,
position $r_i+1$ becomes the last occurrence of symbol $T[r_i+1]$ in
$w_i$. Recall that $\app(w_i)$ inserts symbol $T[r_i+1]$ in $w_i$.
Thus, it must be inserted in $L_{T[r_i+1]}$ in its correct (sorted)
position, if it is not present yet. Obviously, we can do that in
$O(\log_{1+\eps} n)$ time by scanning the whole list. This is too
much, so we show how to spend only constant time. Let $p$ the
rightmost occurrence of the symbol $T[r_i+1]$ in $T[0,
r_i]$.\footnote{Notice that we can precompute and store the last
occurrence of symbol $T[j+1]$ in $T[1,j]$ for all $j$s in linear
time and space.} If $p < l$, then $r_i+1$ must be inserted in the
front of $L_{T[r_i+1]}$ and we have done. In fact, $p < l$ implies
that there is no occurrence of $T[r_i+1]$ in $T[l, r_i]$ and, thus,
no position can precede $r_i+1$ in $L_{T[r_i+1]}$. Otherwise (i.e.
$p \geq l$), we have that $p$ is in $L_{T[r_i+1]}$, because it is
the last occurrence of symbol $T[r_i+1]$ for some window $w_j$ with
$j\leq i$. We observe that if $w_j=w_i$, then $p$ must be replaced
by $r_i+1$ which is now the last occurrence of $T[r_i+1]$ in $w_i$;
otherwise $r_i+1$ must be inserted after $p$ in $L_{T[r_i+1]}$
because $p$ is still the last occurrence of this symbol in the
window $w_j$. We can decide which one is the correct case by
comparing $p$ and $r_{i-1}$ (i.e., the ending position of the
preceding window $w_{r_{i-1}}$). In any case, the list is kept
updated in constant time.

The following Lemma derives by the discussion above:

\begin{lemma}\label{lem:H0}
Let $T[1,n]$ be a text drawn from an alphabet of size $\sigma = {\tt poly}(n)$. 
If we estimate $\len()$ via $0$-th order
entropy (as detailed in Section \ref{sec:notation}), then we can design a 
dynamic data structure that takes $O(n)$ space and supports the operations
$\rem$ in $R(n)=O(\log_{1+\eps} n)$ time, and $\app$ and $\len$ in
$L(n)=O(1)$ time.
\end{lemma}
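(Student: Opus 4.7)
My plan is to assemble the machinery developed in the preceding paragraphs and verify that it meets the claimed bounds by a careful accounting of each operation. The backbone is the identity \eqref{Ei}, which lets me reduce $\len(w_i)$ to maintaining the scalar $E_i=\sum_{c} A_i[c]\log A_i[c]$: once $E_i$ and $r_i-l+1$ are known, \eqref{Ei} gives $|T[l,r_i]|\hk{0}(T[l,r_i])$ in $O(1)$, and adding the source-model term $f_0$ (computable in $O(1)$ by assumption) produces the bit-cost estimate in constant time. So it suffices to keep each $E_i$ correct through the updates.

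For $\app(w_i)$ I would proceed in $O(1)$: the new count $A_i[T[r_i+1]]=R[r_i+1]-B[T[r_i+1]]$ follows from two array lookups, and the correction to $E_i$ is a single subtraction and addition of an $x\log x$ term. The delicate subroutine is updating $L_{T[r_i+1]}$ in constant time; I would justify this by the three-way case split on the precomputed previous occurrence $p$ of $T[r_i+1]$ in $T[1,r_i]$: if $p<l$ prepend $r_i+1$ at the head, and otherwise compare $p$ with the right endpoint $r_{i-1}$ of the preceding window to decide between replacing $p$ by $r_i+1$ in place and inserting $r_i+1$ immediately after $p$. This preserves the invariant that $L_c$ lists, in increasing order, exactly the distinct positions that are the rightmost occurrence of $c$ in some current window. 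For $\rem$ I would bound the work by $O(|L_{T[l]}|)$: walking this list, each node $t$ identifies a window $w_i$ for which $A_i[T[l]]=R[t]-B[T[l]]$ can be read and $E_i$ corrected in $O(1)$; since each window contributes at most one entry to any $L_c$ at any time, $|L_{T[l]}|\le\log_{1+\eps}n$, yielding the claimed $O(\log_{1+\eps}n)$ bound. Popping $l$ from the head of $L_{T[l]}$ when present, decrementing $B[T[l]]$, and advancing $l$ all cost $O(1)$.

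For space, $B$ is $O(\sigma)=O(n)$ since $\sigma={\tt poly}(n)$, $R$ and the precomputed previous-occurrence table are $O(n)$, and the lists $L_c$ together contain at most one entry per text position, giving $O(n)$ overall; the $O(\log_{1+\eps} n)$ per-window scalars $E_i$ and endpoints $r_i$ are negligible. The step I expect to be the main obstacle is verifying the $O(1)$ list maintenance inside $\app$: one must certify that the case split on $p$ versus $l$ and $r_{i-1}$ is exhaustive and that in every branch the invariant on $L_c$ is restored, since the $O(\log_{1+\eps}n)$ bound on $\rem$ hinges on $L_c$ faithfully representing the distinct last occurrences across the current windows.
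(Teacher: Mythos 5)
Your proposal is correct and follows essentially the same route as the paper: the identity \eqref{Ei} for constant-time $\len$, the arrays $R$ and $B$ for computing $A_i[T[r_i+1]]$ on $\app$, the per-symbol lists of last occurrences (with the constant-time case analysis on the precomputed previous occurrence $p$ versus $l$ and $r_{i-1}$) for the list maintenance, and the walk over $L_{T[l]}$ giving the $O(\log_{1+\eps}n)$ bound for $\rem$, all within $O(n)$ space. The only cosmetic difference is that a single list node may serve several consecutive windows (the paper stores such a position once), but your accounting of the update cost per window is unaffected.
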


In order to evict the cost of the model from the compressed output (see 
Section \ref{sec:notation}), authors typically resort to zero-th order 
\textit{adaptive} compressors which do not store the symbols'
frequencies, since they are computed {\em incrementally} during the compression \cite{HowardV92}.
A similar approach can be used in this case to achieve the same time 
and space bounds of Lemma \ref{lem:H0}. Here, we require that 
$\len(w_i)=|\C_0 ^a(T[l,r_i])|= |T[l,r_i]|\hka{0}(T[l,r_i])$. 
Recall that with these type of compressors the model must not be stored. 
We use the same tools above but we change the values stored in variables $E_i$ 
and the way in which they
are updated after a $\rem$ or an $\app$.

Observe that in this case we have that 
\[|\C_0 ^a(T[l,r_i])| = |T[l, r_i]| \hka{0}(T[l, r_i]) = \log ((r_i-l+1)!) - \sum _{c \in \Sigma} \log (n_c!)\] 
where $n_c$ is the number of occurrences of symbol $c$ in  $T[l, r_i]$.
Therefore, if the variable $E_i$ stores the value
$\sum _{c \in \Sigma} \log (A_i[c]!)$, then we have that 
$|T[l,r_i]|\hka{0}(T[l,r_i]) = \log ((r_i-l+1)!) - E_i$.\footnote{Notice 
that the value $\log((r_i-l+1)!)$ can be stored in a variable and updated in 
constant time since the size of the value $r_i-l+1$ changes just by one 
after a $\rem$ or an $\app$.}

After the two operations, we change $E$'s value in the following way:

\begin{enumerate}
 \item \rem$()$: For any window $w_i$ we update $E_i$ by subtracting
$\log (A_i[T[l]])$. We also increase $l$ by one.

 \item \app$(w_i)$: We update $E_i$ by summing $\log A[T[r_i+1]]$ and we increase $r_i$ by one.
\end{enumerate}

By the discussion above and Theorem \ref{th:final} we obtain:

\begin{theorem} \label{th:finalH0}
Given a text $T[1,n]$ drawn from an alphabet of size $\sigma = {\tt poly}(n)$,
 we can find an ${(1+\eps)}$-optimal partition of $T$ with
respect to a $0$-th order (adaptive) compressor in $O(n\log_{1+\eps}
n)$ time and $O(n)$ space, where $\eps$ is any positive constant.
\end{theorem}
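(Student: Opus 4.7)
The plan is essentially an assembly step: all the heavy machinery has been developed already in Section~\ref{sec:approx} and in the current section, so my task is to verify that everything plugs in cleanly. By Theorem~\ref{teo:pruning} the shortest path from $v_1$ to $v_{n+1}$ in $\PG(T)$ has cost at most $(1+\eps)$ times the cost of the optimal partition. By Theorem~\ref{th:final}, if I can supply a dynamic data structure on $T$ of $O(n)$ space supporting $\rem$ in $L(n)$ time and $\app$, $\len$ in $R(n)$ time, then such a path is computable in $O(n\,L(n) + n\log_{1+\eps}(n)\,R(n))$ time and $O(n)$ space. So the statement reduces to exhibiting, in each of the two cases (non-adaptive and adaptive), a data structure with $L(n) = O(\log_{1+\eps} n)$ and $R(n) = O(1)$; plugging these bounds in gives $O(n\log_{1+\eps} n)$ time.

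For the non-adaptive $0$-th order case I would simply invoke Lemma~\ref{lem:H0}, which asserts exactly these bounds. For the adaptive case I would verify that the same data structure (the arrays $B$ and $R$ and the per-symbol linked lists $L_c$) still works when the estimator becomes $|T[l,r_i]|\hka{0}(T[l,r_i]) = \log((r_i-l+1)!) - \sum_{c} \log(A_i[c]!)$. The only change is the update formula: after a $\rem$ each affected $E_i$ must be decreased by $\log(A_i[T[l]])$, and after an $\app(w_i)$ the variable $E_i$ must be increased by $\log(A_i[T[r_i+1]])$. Both are constant-time operations once the relevant counter value is known, and the counter values are recovered from $B$ and $R$ exactly as in the non-adaptive case. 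The factorial prefix $\log((r_i-l+1)!)$ changes by a single additive $\log$ per $\rem$ or $\app$, so it can be maintained incrementally in $O(1)$.

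The only point that really requires attention, and which is the potential obstacle in the adaptive case, is to avoid a naive $O(\log_{1+\eps} n)$ blow-up per window inside $\rem$: there are $\Theta(\log_{1+\eps} n)$ windows, and a priori each might need its counter $A_i[T[l]]$ recomputed. This is precisely the role of the lists $L_{T[l]}$ built in the proof of Lemma~\ref{lem:H0}: only those windows whose last occurrence of $T[l]$ is the one being evicted require an update, and these are exactly the positions linked in $L_{T[l]}$. Traversing this list costs $O(\log_{1+\eps} n)$ total, matching $L(n)$; since the adaptive update per visited window is still $O(1)$, the overall time bound is preserved. Combining this with Theorem~\ref{th:final} yields the claimed $O(n\log_{1+\eps} n)$ time and $O(n)$ space, for both the non-adaptive and adaptive $0$-th order compressors.
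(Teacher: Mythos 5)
Your proposal is correct and takes essentially the same route as the paper: the theorem is obtained there exactly as you do, by combining Theorem~\ref{teo:pruning} and Theorem~\ref{th:final} with Lemma~\ref{lem:H0} for the non-adaptive case and, for the adaptive case, with the same $B$, $R$, $L_c$ machinery under the modified updates (subtract $\log A_i[T[l]]$ on $\rem$, add $\log A_i[T[r_i+1]]$ on $\app(w_i)$, and maintain $\log((r_i-l+1)!)$ incrementally). One small inaccuracy: on a $\rem()$ \emph{every} window's $E_i$ must be updated, since every window loses the occurrence of $T[l]$; the lists $L_{T[l]}$ do not restrict which windows are touched, but rather let you recover each window's counter $A_i[T[l]]$ in constant time as $R[t]-B[T[l]]$ (with $t$ that window's last occurrence of $T[l]$), and since the budget $L(n)=O(\log_{1+\eps} n)$ already allows a constant-time update for each of the $O(\log_{1+\eps} n)$ windows, your stated time and space bounds are unaffected.
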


We point out that these results can be applied to the compression
booster of \cite{FGMS05} to fast obtain an approximation of the
optimal partition of $\BWT(T)$. This may be better than the
algorithm of \cite{FGMS05} both in time complexity, since that
algorithm took $O(n \sigma)$ time, and in compression ratio by a factor 
up to $\Omega(\sqrt{\log n})$ (see the discussion in
Section \ref{sec:intro}). The case of a large alphabet
(namely, $\sigma = \Omega({\tt polylog}(n))$) is particularly
interesting whenever we consider either a word-based \BWT\
\cite{MI05} or the \XBW-transform over labeled trees \cite{FGMS05}.
We notice that our result is interesting also for the {\em Huffword}
compressor which is the standard choice for the storage of Web pages
\cite{Witten:1999:MGC}; here $\Sigma$ consists of the distinct words
constituting the Web-page collection.

\section{On $k$-th order compressors}
\label{sec:hk}

In this section we make one step further and consider the more
powerful $k$-th order compressors, for which do exist $H_k$ bounds
for estimating the size of their compressed output (see Section
\ref{sec:notation}). Here $\len(w_i)$  must compute $|\C(T[l,r_i])|$
which is estimated by $(r_i-l+1) \hk{k}(T[l,r_i]) + f_k(r_i-l+1, |\Sigma_{T[l,r_i]}|)$,
where $\Sigma_{T[l,r_i]}$ denotes the number of different symbols in $T[l,r_i]$..

Let us denote with $T_q[1,n-q]$ the text whose $i$-th symbol $T[i]$
is equal to the $q$-gram $T[i,i+q-1]$. Actually, we can remap the
symbols of $T_q$ to integers in $[1,n]$ without modifying its
zero-th order entropy. In fact the number of distinct $q$-grams
occurring in $T_q$ is less than $n$, the length of $T$. Thus $T_q$'s
symbols take $O(\log n)$ bits and $T_q$ can be stored in $O(n)$
space. This remapping takes linear time and space, whenever $\sigma$
is polynomial in $n$.

A simple calculation shows that the $k$-th order (adaptive) entropy of a string (see
definition Section \ref{sec:notation}) can be expressed as the
difference between the zero-th order (adaptive) entropy of its $k+1$-grams and
its $k$-grams. This suggests that we can use the solution of the
previous section in order to compute the zero-th order entropy of
the appropriate substrings of $T_{k+1}$ and $T_{k}$. More precisely,
we use two instances of the data structure of Theorem
\ref{th:finalH0} (one for  $T_{k+1}$ and one for $T_{k}$), which are
kept {\em synchronized} in the sense that, when operations are
performed on one data structure, then they are also executed on the
other.

\begin{lemma}\label{lem:Hk}
Let $T[1,n]$ be a text drawn from an alphabet of size $\sigma = {\tt poly}(n)$. 
If we estimate $\len()$ via $k$-th order
entropy (as detailed in Section \ref{sec:notation}), then we can design a dynamic data 
structure that takes $O(n)$ space and supports the operations
$\rem$ in $R(n)=O(\log_{1+\eps} n)$ time, and $\app$ and $\len$ in
$L(n)=O(1)$ time.
\end{lemma}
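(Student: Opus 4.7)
The plan is to reduce the $k$-th order case to two independent instances of the $0$-th order data structure built in Theorem \ref{th:finalH0}. The starting point is the algebraic identity
\[
|T[l,r]|\; \hk{k}(T[l,r]) \;=\; \sum_{u\in \Sigma^k} n_u \log n_u \;-\; \sum_{w \in \Sigma^{k+1}} n_w \log n_w,
\]
where $n_u$ (resp.\ $n_w$) counts the occurrences of the $k$-gram $u$ (resp.\ $(k{+}1)$-gram $w$) inside $T[l,r]$. Rewriting each sum using $|S|\hk{0}(S) = |S|\log|S| - \sum_c n_c\log n_c$, both terms become $0$-th order entropies of the remapped strings $T_k$ and $T_{k+1}$, whose $i$-th symbol encodes the $q$-gram $T[i,i{+}q{-}1]$ for $q=k$ and $q=k{+}1$ respectively. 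An analogous identity applies for the adaptive variant, with $\log n_c!$ replacing $n_c \log n_c$.

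First I would build $T_k$ and $T_{k+1}$ explicitly: since there are at most $n$ distinct $q$-grams, I can remap each $q$-gram to an integer in $[1,n]$ in $O(n)$ time using, e.g., radix sort on the original $T$ (this is where the hypothesis $\sigma = {\tt poly}(n)$ matters, to bound the cost of sorting). Each of these remapped texts has length $n-O(k)$, uses symbols of $O(\log n)$ bits, and can thus be stored in $O(n)$ space.

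Next I would instantiate the dynamic structure of Lemma \ref{lem:H0} twice, one over $T_k$ and one over $T_{k+1}$, and keep them \emph{synchronized}: every $\rem()$ and every $\app(w_i)$ issued by the SSSP driver is forwarded to both structures (with $r_i$ shifted by $1$ in the appropriate direction so that the two windows cover exactly the $k$-grams and $(k{+}1)$-grams occurring in $T[l,r_i]$). A call to $\len(w_i)$ then retrieves the two $0$-th order estimates in $O(1)$ time each, combines them via the identity above, and adds the precomputed $f_k(r_i-l+1, |\Sigma_{T[l,r_i]}|)$ term; since $f_k$ is assumed computable in constant time from its arguments and the alphabet-size counter can itself be maintained incrementally inside the $0$-th order structure, the whole $\len$ call stays in $O(1)$.

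The main subtlety I anticipate is the boundary bookkeeping between $T_k$ and $T_{k+1}$: a window of length $\ell$ in $T$ corresponds to $\ell-k+1$ symbols of $T_k$ but $\ell-k$ symbols of $T_{k+1}$, so the two synchronized structures must track slightly different right endpoints, and $\rem$ has to correctly discard the $k$-gram (resp.\ $(k{+}1)$-gram) that just left each window. This is purely mechanical once the identity is in place, but it must be spelled out to verify that each synchronized operation is still $O(1)$ amortized per window (hence $O(\log_{1+\eps}n)$ overall for $\rem$). Space stays $O(n)$ because the $O(n)$-space auxiliary arrays $B$ and $R$ from the $0$-th order construction, together with the remapped texts $T_k$ and $T_{k+1}$, are all linear in $n$ and independent of $\sigma$ and $k$, yielding the claimed bounds.
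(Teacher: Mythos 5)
Your proposal matches the paper's own argument essentially verbatim: remap the $k$-grams and $(k+1)$-grams to integers in $[1,n]$ (feasible in linear time since $\sigma={\tt poly}(n)$), run two synchronized copies of the $0$-th order structure of Lemma \ref{lem:H0} over $T_k$ and $T_{k+1}$, and obtain $\len$ from the identity expressing $|w_i|\hk{k}(w_i)$ as the difference of the two $0$-th order quantities (the paper even leaves the window-endpoint bookkeeping you flag as ``a simple calculation''). Correct, and the same route.
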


Essentially the same technique is applicable to the case of $k$-th
order {\em adaptive} compressor $\C$, in this
case we keep up-to-date the $0$-th order {\em adaptive} entropies of
the strings $T_{k+1}$ and $T_k$  (details in \cite{ourArxiv}).

\begin{theorem} \label{th:finalHk}
Given a text $T[1,n]$ drawn from an alphabet of size $\sigma = {\tt poly}(n)$, 
we can find an ${(1+\eps)}$-optimal partition of $T$ with
respect to a $k$-th order (adaptive) compressor in
$O(n\log_{1+\eps} n)$ time and $O(n)$ space,  where $\eps$ is any
positive constant.
\end{theorem}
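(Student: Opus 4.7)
The plan is to derive the theorem as an immediate consequence of Lemma \ref{lem:Hk} combined with Theorem \ref{th:final}. Since Lemma \ref{lem:Hk} already supplies a dynamic data structure that occupies $O(n)$ space and supports $\rem$ in time $L(n) = O(\log_{1+\eps}n)$ and both $\app$ and $\len$ in time $R(n) = O(1)$ when $\len$ is estimated via $\hk{k}$, substituting these bounds into Theorem \ref{th:final} gives running time
$$O\bigl(n\, L(n) + (n \log_{1+\eps} n)\, R(n)\bigr) \;=\; O(n \log_{1+\eps} n)$$
and space $O(n)$, which settles the non-adaptive half of the claim.

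The remaining piece is to lift the argument to the $k$-th order \emph{adaptive} compressor. Here I would first verify the identity
$$ |T[l,r]|\, \hk{k}^a(T[l,r]) \;=\; |T_{k+1}[l,r{-}k]|\, \hk{0}^a(T_{k+1}[l,r{-}k]) \;-\; |T_{k}[l,r{-}k{+}1]|\, \hk{0}^a(T_{k}[l,r{-}k{+}1]) $$
up to a boundary term that depends only on the window length and is therefore computable in $O(1)$ time. This identity follows from the definition $\hk{k}^a(T) = \frac{1}{|T|} \sum_{u} |u_T|\, \hk{0}^a(u_T)$ by re-indexing the sum according to the $(k+1)$-gram and $k$-gram frequencies of the window, in perfect analogy with the non-adaptive decomposition already exploited in Lemma \ref{lem:Hk}. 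Given the identity, I would maintain two synchronized instances of the \emph{adaptive} zero-th order data structure developed at the end of Section \ref{sec:h0}, one running on the remapped string $T_{k+1}$ and the other on the remapped string $T_{k}$, and combine their $\len$ outputs with the boundary term at each query. Since both instances admit $L(n) = O(\log_{1+\eps}n)$ and $R(n) = O(1)$, pairing their operations preserves these bounds, and a second invocation of Theorem \ref{th:final} delivers the adaptive bound.

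The step I expect to be the main obstacle is the careful synchronization of the two instances at a window boundary: each $\rem$ must evict exactly one $k$-gram from the $T_k$-instance \emph{and} one $(k+1)$-gram from the $T_{k+1}$-instance, while each $\app(w_i)$ must insert the new $k$-gram and $(k+1)$-gram that end at the advanced right boundary; these updates must be coordinated so that the $\eps$-maximal thresholds $(1+\eps)^t$ are detected on the combined cost rather than on either component in isolation. The remapping of $k$-grams and $(k+1)$-grams to integers in $[1,n]$ (done once in $O(n)$ preprocessing via, e.g., sorting or a suffix array) guarantees that each instance fits in $O(n)$ space regardless of $\sigma$ and $k$. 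Finally, adding the model cost $g_k(\sigma)$ per $\len$ query is a constant-time adjustment and does not affect the asymptotic bounds, completing the proof.
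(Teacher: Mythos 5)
Your proposal is correct and follows essentially the same route as the paper: the non-adaptive bound is obtained by plugging Lemma \ref{lem:Hk} into Theorem \ref{th:final}, and the adaptive case is handled by expressing $\hk{k}^a$ as the difference of the zero-th order adaptive entropies of the remapped strings $T_{k+1}$ and $T_k$ (plus a length-only term) maintained by two synchronized instances of the Section \ref{sec:h0} structure. Your explicit identification of the boundary term $\log(|T_{k+1}|!)-\log(|T_k|!)$ is in fact a more precise statement of what the paper dismisses as ``a simple calculation'' and defers to its full version.
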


We point out that this result applies also to the practical case in
which the base compressor $\C$ has a maximum (block) size $B$ of
data it can process at once (this is the typical scenario for \gzip,
\bzip, etc.). In this situation the time performance of our solution
reduces to $O(n \log_{1+\eps} (B \log \sigma))$.

\section{On \BWT-based compressors}\label{sec:bwt}

As we mentioned in Section \ref{sec:notation} we know entropy-bounded
estimates for the output size of \BWT-based compressors. So we could
apply Theorem \ref{th:finalHk} to compute the optimal partitioning
of $T$ for such a type of compressors. Nevertheless, it is also
known \cite{FGM06b} that such compression-estimates are rough in
practice because of the features of the compressors that are applied
to the $\BWT(T)$-string. Typically, $\BWT$ is encoded via a sequence
of simple compressors such as $\MTF$ encoding, $\RLE$ encoding
(which is optional), and finally a $0$-order encoder like Huffman or
Arithmetic \cite{Witten:1999:MGC}. For each of these compression
steps, a $0$-th entropy bound is known \cite{Manz01}, but the
combination of these bounds may result much far from the final
compressed size produced by the overall sequence of compressors in
practice \cite{FGM06b}.

In this section, we propose a solution to the optimal partitioning
problem for \BWT-based compressors that introduces a $\Theta(\sigma
\log n)$ slowdown in the time complexity of Theorem
\ref{th:finalHk}, but with the advantage of computing the
$(1+\eps)$-optimal solution wrt the real compressed size, thus
without any estimation by any entropy-cost functions. Since in
practice it is $\sigma= \texttt{polylog} (n)$, this slowdown should
be negligible. In order to achieve this result, we need to address a
slightly different (but yet interesting in practice) problem which
is defined as follows. The input string $T$ has the form
$S[1]\#_1S[2]\#_2 \ldots S[m]\#_n$ where each $S[i]$ is a text
(called {\em page}) drawn from an alphabet $\Sigma$, and $\#_1,\#_2,
\ldots, \#_n$ are special characters greater than any symbol of
$\Sigma$. A partition of $T$ must be page-aligned, that is it must
form {\em groups of contiguous pages} $S[i]\#_i\ldots S[j]\#_j$,
denoted also $S[i,j]$. Our aim is to find a page-aligned partition
whose cost (as defined in Section \ref{sec:intro}) is at most
$(1+\eps)$ the minimum possible cost, for any fixed $\eps>0$. We
notice that this problem generalizes the table partitioning problem
\cite{Tables}, since we can assume that $S[i]$ is a column of the
table.

To simplify things we will drop the $\RLE$ encoding step of a
\BWT-based algorithm, and defer the complete solution to the full
version of this paper. We start by noticing that a close analog of
Theorem~\ref{th:final} holds for this variant of the optimal
partitioning problem, which implies that a $(1+\eps)$-approximation
of the optimum cost (and the corresponding partition) can be
computed using a data structure supporting operations $\app$,
$\rem$, and $\len$; with the only difference that the windows
$w_1,w_2,\ldots,w_m$ subject to the operations are groups of
contiguous pages of the form $w_i = S[l,r_i]$.

It goes without saying that there exist data structures designed to
dynamically maintain a dynamic text compressed with a $\BWT$-based
compressor under insertions and deletions of symbols (see
\cite{permutermTALG} and references therein). But they do not fit
our context for two reasons: (1) their underlying compressor is
significantly different from the scheme above; (2) in the worst
case, they would spend linear space per window yielding a
super-linear overall space complexity.

Instead of keeping a given window $w$ in compressed form, our
approach will only store the frequency distribution of the integers
in the string $w'=\MTF(\BWT(w))$ since this is enough to compute
the compressed output size produced by the final step of the \BWT-based algorithm,
 which is usually implemented via Huffman or Arithmetic \cite{Witten:1999:MGC}. 
Indeed, since $\MTF$ produces a sequence of
integers from $0$ to $\sigma$, we can store their number of
occurrences for each window $w_i$ into an array $F_{w_i}$ of size
$\sigma$. The update of $F_{w_i}$ due to the insertion or the
removal of a page in $w_i$ incurs two main difficulties: (1) how to
update $w_i'$ as pages are added/removed from the extremes of the
window $w_i$, (2) perform this update implicitly over $F_{w_i}$,
because of the space reasons mentioned above. Our solution relies on
two key facts about $\BWT$ and $\MTF$:

\begin{smallenumerate}
\item Since the pages are separated in $T$ by distinct separators,
inserting or removing one page into a window $w$ does not alter the
relative lexicographic order of the original suffixes of $w$ (see
\cite{permutermTALG}).

\item If a string $s'$ is obtained from string $s$ by inserting
or removing a char $c$ into an arbitrary position, then $\MTF(s')$
differs from $\MTF(s)$ in at most $\sigma$ symbols. More precisely,
if $c'$ is the next occurrence in $s$ of the newly inserted (or
removed) symbol $c$, then the \MTF\ has to be updated only in the
first occurrence of each symbol of $\Sigma$ among $c$ and $c'$.
\end{smallenumerate}

Due to space limitations we defer the solution to the Appendix B,
and state here the result we are able to achieve.

\begin{theorem}\label{th:bwt}
Given a sequence of texts of total length $n$ and alphabet size
$\sigma = {\tt poly}(n)$, we can compute an $(1+\eps)$-approximate 
solution to the optimal partitioning problem
for a $\BWT$-based compressor, in $O(n (\log_{1+\eps} n)\; \sigma\; \log n)$ time and $O(n +
\sigma \log_{1+\eps} n)$ space.
\end{theorem}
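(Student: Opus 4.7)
The plan is to instantiate the SSSP-over-$\PG$ framework of Theorem~\ref{th:final} at the granularity of pages: vertices correspond to page boundaries and edges to contiguous page groups $S[i,j]$. The monotonicity in Fact~\ref{prop:increasing} survives page-concatenation, so the pruning of Theorem~\ref{teo:pruning} still yields a $(1+\eps)$-approximation once we can support $\rem$, $\app$, and $\len$ on page-aligned windows $w_i = S[l,r_i]$, with $\len(w_i)$ returning the \emph{true} size $|\C(w_i)|$ produced by running $\BWT$, then $\MTF$, then a Huffman/Arithmetic encoder on $w_i$.

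The key observation is that the final encoder is $0$-th order on the integers emitted by $\MTF$, which live in $\{0,\ldots,\sigma\}$, so $|\C(w_i)|$ is determined (up to a constant-time additive model term) by the frequency distribution of $\MTF(\BWT(w_i))$. I would maintain for each window an array $F_{w_i}[0..\sigma]$ holding exactly this distribution; this contributes the $O(\sigma \log_{1+\eps} n)$ term in the space bound and gives $\len(w_i)$ in $O(\sigma)$ time by summing the $0$-th order entropy contributions plus $f_0$. The nontrivial work is updating $F_{w_i}$ when a page $S[j]$ slides into or out of an extreme of $w_i$.

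To update $F_{w_i}$ I would set up a single shared dynamic data structure of $O(n)$ space---morally a dynamic suffix-array/$\BWT$ representation of $T$---that exposes, for the current $w_i$, $O(\log n)$-time queries of the form ``what is the position in $\BWT(w_i)$ of the character I am about to insert or delete?'' and ``what is the next occurrence of a given symbol in $\BWT(w_i)$ after a given position?''. Key fact~(1) guarantees that inserting or removing a page never perturbs the lexicographic order of the suffixes already present in $w_i$, so its effect on $\BWT(w_i)$ decomposes cleanly into $|S[j]|$ individual character edits. For each such edit, key fact~(2) tells us that only $O(\sigma)$ positions of $\MTF(\BWT(w_i))$ change---precisely the first occurrence, inside $\BWT(w_i)$, of each distinct symbol lying between the edited character $c$ and its next occurrence---and these are exactly the positions locatable via the next-occurrence queries above. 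At each of these $O(\sigma)$ affected positions I recompute the old and new MTF value and decrement/increment $F_{w_i}$ accordingly, at a total cost of $O(\sigma \log n)$ per character edit.

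Amortizing, every call to $\app$ or $\rem$ inserts or removes one page from one window, so the total number of character edits across the whole SSSP is bounded by the aggregate page length summed over all $\log_{1+\eps} n$ windows, which is $O(n \log_{1+\eps} n)$; at $O(\sigma \log n)$ per edit, plus the $O(n \log_{1+\eps} n)$ calls to $\len$ costing $O(\sigma)$ each, we obtain the claimed $O(n (\log_{1+\eps} n)\,\sigma\,\log n)$ time within $O(n + \sigma \log_{1+\eps} n)$ space. The hard part, to be detailed in Appendix~B, is engineering the shared dynamic structure so that it represents $\BWT(w_i)$ \emph{implicitly} for all $\log_{1+\eps} n$ windows simultaneously while keeping its total storage linear in $n$; the natural approach is to extend the shared-state idea already used in Section~\ref{sec:h0} (the arrays $B$ and $R$), exploiting the fact that all windows share the common left endpoint $l$, so that per-window state only needs to encode the information induced by the varying right endpoints $r_i$.
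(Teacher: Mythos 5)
Your overall plan matches the paper's: page-aligned SSSP with the same pruning, per-window frequency arrays $F_{w}$ of the $\MTF$ output, key facts (1) and (2) to argue that each single-character edit perturbs at most $\sigma$ $\MTF$-encodings, a cost of $O(\sigma\log n)$ per character edit, and the same amortization over $O(n\log_{1+\eps}n)$ character edits. But the one genuinely technical component --- the $O(n)$-space structure that answers, in $O(\log n)$ time, ``previous/next occurrence of symbol $c$ among the positions of $\BWT$ that belong to the current window'' --- is exactly what you defer as ``the hard part, to be detailed,'' and the directions you gesture at do not close it. A dynamic suffix-array/$\BWT$ representation is explicitly ruled out in the paper (wrong compressor model, and potentially linear space \emph{per window}, hence superlinear overall), and the shared arrays $B$ and $R$ of Section \ref{sec:h0} only count symbol frequencies in text order; here the query is inherently two-dimensional, since one must intersect a constraint in text order (the symbol's position lies in the window $[a,b]$) with a constraint in lexicographic/$\BWT$ order (it lies before or after position $h$ in $\BWT(T)$), restricted to a fixed symbol $c$.

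The missing idea is that no dynamism is needed at all. Because the pages are terminated by unique separators, the $\BWT$ of any page-aligned window $w$ coincides with the subsequence of the \emph{static} $\BWT(T)$ formed by its ``active'' symbols (those whose text positions fall in $w$), i.e.\ with $\RBWT(w)$. One can therefore precompute $sa_T$ and $isa_T$ once, map each character of the incoming page to its position $h$ in $\BWT(T)$ via $isa_T$, and implement the required $\prev_c(I,h)$ / $\next_c(I,h)$ queries by a reduction to static 2D range searching: for each symbol $c$ build the point set $P_c=\{(i,isa_T[i+1]) : T[i]=c\}$ and answer with {\tt rangemax}/{\tt rangemin} in $O(\log n)$ time and $O(n)$ total space. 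These queries also let you reconstruct the $\MTF$-list state just before position $h$ (last active occurrence of each symbol before $h$, sorted by decreasing position), which your sketch implicitly assumes but never provides a mechanism for. Without this static reduction the claimed $O(n+\sigma\log_{1+\eps}n)$ space and the $O(\log n)$ per-query time are unsubstantiated, so as written the proof has a genuine gap precisely where Theorem \ref{th:bwt}'s bounds are earned.
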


\section{Conclusion} \label{sec:conclusion}
In this paper we have investigated the problem of partitioning an
input string $T$ in such a way that compressing individually its
parts via a base-compressor $\C$ gets a compressed output that is
shorter than applying $\C$ over the entire $T$ at once. We provide
the first algorithm which is guaranteed to compute in $O(n
\log_{1+\eps}n)$ time a partition of $T$ whose compressed output is
guaranteed to be no more than $(1+\epsilon)$-worse the optimal one,
where $\epsilon$ may be any positive constant. As future directions
of research we would like either to investigate the design of
$o(n^2)$ algorithms for computing the {\em exact} optimal partition,
and/or experiment and engineer our solution over large datasets.

\bibliographystyle{plain}
\bibliography{opartition}

\newpage

\section*{Appendix A\\ An example for the booster}
In this section we prove that there exists an infinite class of
strings for which the partition selected by booster is far from the
optimal one by a factor $\Omega(\sqrt{\log n})$. Consider an
alphabet $\Sigma = \{c_1, c_2,\ldots, c_\sigma \}$ and assume that
$c_1 < c_2 < \ldots < c_\sigma$. We divide it into $l = \sigma/\alpha$
groups of $\alpha$ consecutive symbols each, where $\alpha >0$ will
be defined later. Let $\Sigma_1, \Sigma_2, \ldots, \Sigma_l$ denote
these sub-alphabets. For each $\Sigma_i$, we build a De Bruijn
sequence $T_i$ in which each pair of symbols of $\Sigma_i$ occurs
exactly once. By construction each sequence $T_i$ has length
$\alpha^2$. Then, we define $T=T_1 T_2 \ldots T_l$, so that $|T|=
\sigma \alpha$ and each symbol of $\Sigma$ occurs exactly $\alpha$
times in $T$. Therefore, the first column of $\BWT$ matrix is equal to
$(c_1)^{\alpha} (c_2)^{\alpha} \ldots (c_\sigma)^{\alpha}$. We
denote with $L_c$ the portion of $\BWT(T)$ that has symbol $c$ as
prefix in the $\BWT$ matrix. By construction, if $c \in \Sigma_i$,
we have that any $L_c$ has either one occurrence of each symbol of
$\Sigma_i$ or one occurrence of these symbols of $\Sigma_i$ minus one
plus one occurrence of some symbol of $\Sigma_{i-1}$ (or $\Sigma_l$
if $i=1$). In both cases, each $L_c$ has $\alpha$ symbols, which are
all distinct. Notice that by construction, the longest common prefix
among any two suffixes of $T$ is at most $1$. Therefore, since the
booster can partition only using prefix-close contexts (see
\cite{FGMS05}), there are just three possible partitions: (1) one
substring containing all symbols of $L$, (2) one substring per
$L_c$, or (3) as many substrings as symbols of $L$. Assuming that
the cost of each model is at least $\log \sigma$ bits\footnote{Here
we assume that it contains at least one symbol. Nevertheless, as we
will see, the compression gap between booster's partition and the
optimal one grows as the cost of the model becomes bigger.}, then
the costs of all possible booster's partitions are:
\begin{enumerate}

\item Compressing the whole $L$ at once has cost at least $\sigma \alpha \log \sigma$ bits. In fact, all the symbols
in $\Sigma$ have the same frequency in $L$.

\item Compressing each string $L_c$ costs at least $\alpha \log \alpha + \log \sigma$ bits,
since each $L_c$ contains $\alpha$ distinct symbols.
Thus, the overall cost for this partition is at least $\sigma \alpha \log \alpha + \sigma \log \sigma$ bits.

\item Compressing each symbol separately has overall cost at least $\sigma \alpha \log \sigma$ bits.
\end{enumerate}

We consider the alternative partition which is not achievable by the
booster that subdivides $L$ into $\sigma/ \alpha^2$ substrings
denoted $S_1, S_2, \ldots, S_{\sigma/ \alpha^2}$ of size $\alpha^3$
symbols each (recall that $|T|=\sigma \alpha$). Notice that each
$S_i$ is drawn from an alphabet of size smaller than $\alpha^3$.

The strings $S_i$ are compressed separately. The cost of compressing
each string $S_i$ is $O(\alpha^3 \log \alpha^3 + \log \sigma) =
O(\alpha^3 \log \alpha + \log \sigma)$. Since there are $\sigma/
\alpha^2$ strings $S_i$s, the cost of this partition is $P =
O(\sigma \alpha \log \alpha + (\sigma/\alpha^2) \log \sigma)$.
Therefore, by setting $\alpha = O(\sqrt{\log \sigma}/\log \log
\sigma)$, we have that $P= O(\sigma \sqrt{\log \sigma})$ bits. As
far as the booster is concerned, the best compression is achieved by
its second partition whose cost is $O(\sigma \log \sigma)$ bits.
Therefore, the latter is $\Omega(\sqrt{\log \sigma})$ times larger
than our proposed partition. Since $\sigma \geq \sqrt{n}$, the ratio
among the two partitions is $\Omega(\sqrt{\log n})$.

\section*{Appendix B\\ Proof of Theorem \ref{th:bwt}}
We describe a data structure supporting operations $\app(w)$ and
$\rem()$ when the base compressor is \BWT-based, and the input text
$T$ is the concatenation of a sequence of pages $S[1], S[2], \ldots,
S[m]$ separated by unique separator symbols $\#_1,\#_2,\ldots,\#_m$,
which are not part of $\Sigma$ and are lexicographically larger than
any symbol in $\Sigma$. We assume that the separator symbols in the
$\BWT(T)$ are ignored by the $\MTF$ step, which means that when the
$\MTF$ encoder finds a separator in $\BWT(T)$, this is replaced with
the corresponding integer without altering the $\MTF$-list. This
variant does not introduce any compression penalty (because every
separator occurs just once) but simplifies the discussion that
follows. We denote with $sa_T[1,n]$ and $isa_T[1,n]$ respectively
the suffix array of $T$ and its inverse. Given a range $I=[a,b]$ of
positions of $T$, an occurrence of a symbol of $\BWT(T)$ is called
\textit{active$_{[a,b]}$} if it corresponds to a symbol in $T[a,b]$.
For any range $[a,b] \subset [n]$ of positions in $T$, we define
$\RBWT(T[a,b])$ as the string obtained by concatenating the
active$_{[a,b]}$ symbols of $\BWT(T)$ by preserving their relative
order. In the following, we will not indicate the interval when it
will be clear from the context. Notice that, due to the presence of
separators, $\RBWT(T[a,b])$ coincides with $\BWT(T[a,b])$ when
$T[a,b]$ spans a group of contiguous pages (see \cite{permutermTALG}
and references therein). Moreover, $\MTF(\RBWT(T[a,b]))$ is the
string obtained by performing the \MTF\ algorithm on
$\RBWT(T[a,b])$. We will call the symbol $\MTF(\RBWT(T[a,b]))[i]$ as
the \MTF-encoding of the symbol $\RBWT(T[a,b])[i]$.

For each window $w$, our solution will not explicitly store neither
$\RBWT(w)$ or $\MTF(\RBWT(T[a,b]))$ since this might require a
superlinear amount of space. Instead, we maintain only an array
$F_w$ of size $\sigma$ whose entry $F_w[e]$ keeps the number of
occurrences of the encoding $e$ in $\MTF(\RBWT(w))$. The array $F_w$
is enough to compute the $0$-order entropy of $\MTF(\RBWT(w))$ in
$\sigma$ time (or eventually the exact cost of compressing it with
Huffman in $\sigma \log \sigma$ time).

We are left with showing how to correctly keep updated $F_w$ after a
$\rem()$ or an $\app(w)$. In the following we will concentrate only
on $\app(w)$ since $\rem()$ is symmetrical. The idea underlying the
implementation of $\app(w)$, where $w=S[l,r]$, is to {\em
conceptually insert} the symbols of the next page $S[r+1]$ into
$\RBWT(w)$ one at time from left to right. Since the relative order
among the symbols of $\RBWT(w)$ is preserved in $\BWT(T)$, it is
more convenient to work with active symbols of $\BWT(T)$ by
resorting to a data structure, whose details are given later, which
is able to efficiently answer the following two queries with
parameters $c$, $I$ and $h$, where $c \in \Sigma$, $I=[a,b]$ is a
range of positions in $T$ and $h$ is a position in $\BWT(T)$:

 \begin{itemize}
  \item $\prev _{c}(I,h)$: locate the last active$_{[a,b]}$ occurrence in $\BWT(T)[0,h-1]$ of symbol $c$;
  \item $\next _{c}(I,h)$: locate the first active$_{[a,b]}$ occurrence in $\BWT(T)[h+1,n]$ of symbol $c$.
 \end{itemize}

This data structure is built over the whole text $T$ and requires
$O(|T|)$ space.

Let $c$ be the symbol of $S[r_i+1]$ we have to conceptually insert
in $\RBWT(T[a,b])$. We can compute the position (say, $h$) of this
symbol in $\BWT(T)$ by resorting to the inverse suffix array of $T$.
Once we know position $h$, we have to determine what changes in
$\MTF(\RBWT(w))$ the insertion of $c$ has produced and update $F_w$
accordingly. It is not hard to convince ourselves that the insertion
of symbol $c$ changes no more than $\sigma$ encodings in
$\MTF(\RBWT(w))$. In fact, only the first active occurrence of each
symbol in $\Sigma$ after position $h$ may change its \MTF\ encoding.
More precisely, let $h_p$ and $h_n$ be respectively the last active
occurrence of $c$ before $h$ and the first active occurrence of $c$
after $h$ in $\BWT(w)$, then the first active occurrence of a symbol
after $h$ changes its \MTF\ encoding if and only if it occurs active
both in $\BWT(w)[h_p,h]$ and in $\BWT(w)[h,h_n]$. Otherwise, the new
occurrence of $c$ has no effect on its \MTF\ encoding. Notice that
$h_p$ and $h_n$ can be computed via proper queries $\prev_c$ and
$\next_c$. In order to correctly update $F_w$, we need to recover
for each of the above symbols their old and new encodings. The first
step consists of finding the last active occurrence before $h$ of
each symbols in $\Sigma$ using $\prev$ queries. Once we have these
positions, we can recover the status of the $\MTF$ list, denoted
$\lambda$, before encoding $c$ at position $h$. This is simply
obtained by sorting the symbols ordered by decreasing position. In
the second step, for each distinct symbol that occurs active in
$\BWT(w)[h_p,h]$, we find its first active occurrence in
$\BWT(w)[h,h_n]$. Knowing $\lambda$ and these occurrences sorted by
increasing position, we can simulate the $\MTF$ algorithm to find
the old and new encodings of each of those symbols.

This provides an algorithm to perform $\app(w)$ by making
$O(\sigma)$ queries of types $\prev$ and $\next$ for each symbol of
the page to append in $w$. To complete the proof of the time bounds
in Theorem \ref{th:bwt} we have to show how to support queries of
type $\prev$ and $\next$ in $O(\log n)$ time and $O(n)$ space. This
is achieved by a straightforward reduction to a classic geometric
range-searching problem. Given a set of points $P =
\{(x_1,y_1),(x_2,y_2),\ldots,(x_p,y_p)\}$ from the set
$[n]\times[n]$ (notice that $n$ can be larger than $p$), such that
no pair of points shares the same $x$- or $y$-coordinate, there
exists a data structure \cite{MNlatin06} requiring $O(p)$ space and
supporting the following two queries in $O(\log p)$ time:

\begin{itemize}
 \item {\tt rangemax$([l,r],h)$}: return among the points of $P$ contained in $[l,r] \times [-\infty,h]$ 
the one with maximum $y$-value
 \item {\tt rangemin$([l,r],h)$}: return among the points of $P$ contained in $[l,r] \times [h,+\infty]$ 
the one with minimum $y$-value
\end{itemize}

Initially we compute $isa_T$ and $sa_T$ in $O(n\log \sigma)$ time then, for each symbol $c \in \Sigma$,
we define $P_c$ as the set of points $\{(i,isa_T[i+1])|$ $T[i]=c\}$ and build the above geometric range-searching
structure on $P_c$. It is easy to see that  $\prev _c(I,h)$ can be computed in $O(\log n)$ time by calling {\tt rangemax}$(I,isa_T[h+1])$ on the set $P_c$, and the same holds for $\next _c$ by using {\tt rangemin} instead of
{\tt rangemax}, this completes the reduction and the proof of the theorem.

\end{document}